\def\betaOL{\beta_{ \mbox{\scriptsize OL } } }
\def\betaC{\beta_{ \mbox{\scriptsize C } } }
\newtheorem{theorem}{Theorem}
\newtheorem{lemma}{Lemma}
\newtheorem{proposition}{Proposition}
\newtheorem{remark}{Remark}
\definecolor{dblue}{rgb}{0,0,0}
\definecolor{mred}{rgb}{0,0,0}
\definecolor{dgreen}{rgb}{0,0,0}
\definecolor{mgreen}{rgb}{0,0,0}
\definecolor{mblue}{rgb}{0,0,0}
\definecolor{dyellow}{rgb}{0,0,0}
\definecolor{dmagenta}{rgb}{0,0,0}
\definecolor{dcyan}{rgb}{0,0,0}
\definecolor{dbblue}{rgb}{0, 0, 0}
\definecolor{drred}{rgb}{0, 0, 0}
\title{
Controlled Sensing for Multihypothesis Testing }
\author{

Sirin Nitinawarat, {\em Member, IEEE}, George Atia,
{\em Member, IEEE} and\\ Venugopal V. Veeravalli, {\em
Fellow, IEEE}

\thanks{This work was supported by the Air Force Office of Scientific Research (AFOSR) under the Grant
FA9550-10-1-0458 through the University of Illinois at
Urbana-Champaign, by the U.S. Defense Threat Reduction Agency through subcontract 147755 at the University of Illinois from prime award HDTRA1-10-1-0086, and by the National Science Foundation under Grant NSF CCF
11-11342.  The material in this paper was presented in
part at the IEEE International Conference on Acoustic,
Speech and Signal Processing, Kyoto, Japan, March 2012,
and at the IEEE International Symposium on
Information Theory, Cambridge, USA, July 2012.}

\thanks{S. Nitinawarat and V. V. Veeravalli are with the Department of Electrical and Computer Engineering and the Coordinated Science Laboratory, University of Illinois, Urbana, IL 61801.
Email: $\left \{ \right.$\texttt{nitinawa, vvv}$\left.
\right \}$\texttt{@illinois.edu.}}

\thanks{ 
G. K. Atia is with the Department of Electrical Engineering and Computer Science, University of Central Florida, Orlando, FL 32816-2362. Email: \texttt{george.atia@ucf.edu.} 
} }
\begin{document}

\maketitle

\begin{abstract}
The problem of multiple hypothesis testing with
observation control is considered in both fixed sample
size and sequential settings. 
In the fixed sample size setting, for binary hypothesis testing, the optimal exponent for the maximal error probability corresponds to the maximum Chernoff information over the choice of controls, and a pure stationary open-loop control policy is asymptotically optimal within the larger class of all causal control policies.
For multihypothesis
testing in the fixed sample size setting,  lower and
upper bounds on the optimal error exponent are derived.
It is also shown through an example with three
hypotheses that the optimal causal control policy can
be strictly better than the optimal open-loop control
policy.  In the sequential setting,  a test based on
earlier work by Chernoff for binary hypothesis testing,
is shown to be first-order asymptotically optimal for
multihypothesis testing in a strong sense, using the 
notion of decision making risk in place of the overall 
probability of error.  Another test is also designed to 
meet hard risk constrains while retaining asymptotic 
optimality.  The role of past information and
randomization in designing optimal control policies is
discussed.

\end{abstract}
{\bf Keywords:}
Chernoff information, controlled sensing, detection and
estimation theory, design of experiments, error
exponent, hypothesis testing, Markov decision process.

\section{Introduction}
\label{sec:intro} The topic of controlled sensing for
inference deals primarily with adaptively managing and
controlling multiple degrees of freedom in an
information-gathering system, ranging from the sensing
modality to the physical control of sensors, to achieve
a given inference task. 
Unlike in traditional
control systems, where the control primarily affects
the evolution of the state, in controlled sensing, the
control affects  only the observations. 
The goal is for the decision-maker to infer the state
accurately by shaping the quality of observations.

Some applications of controlled sensing
include, but are by no means limited to, target
detection, tracking and classification (see, e.g.,
\cite{cast-cdc-1997, atia-veer-fuem-ieeetsp-2011}).  
Controlled sensing policies were also developed for
landmine and underwater mine classification in
\cite{hero-et-al-book-2008}.  In the domain of sequential clinical
trials, controlled sensing has been used to planning 
of medical trials under an ethical injunction against 
unnecessary continuance of inferior treatments
\cite{ansc-jasta-1963}. Dynamic sensor selection and
scheduling policies were also developed for tracking
and target localization in
\cite{he-chon-dsp-2006, kris-djon-ieeetsp-2007}.

In this paper, we focus on the basic inference problem
of hypothesis testing, and our goal is to find an asymptotically 
optimal joint-design of a control policy and a decision rule
(in addition to a stopping rule for the sequential
setting) to decide among the various hypotheses
\cite{niti-atia-veer-icassp-2012}.  In particular, we consider 
a Markovian model for the
simple hypothesis testing of multiple hypotheses with
observation control.  Prior to making a decision about
the hypothesis, the decision-maker can choose among
different actions to shape the quality of the
observations.  
We consider both the fixed
sample size and sequential settings of this problem. In
the latter setting, the controller can adaptively
choose to stop taking observations, and the sequential
test is fully described by a control policy, a stopping
rule and a final decision rule.

\subsection{Relationship to  Prior Work}

We begin by discussing prior work in the fixed
sample size setting. Tsitsiklis  \cite{tsit-mcss-1988}
considered the problem of quantizing
independent observations at geographically
separated sensors for multiple hypothesis testing. The
number of sensors, which is taken to infinity in
\cite{tsit-mcss-1988}, can be considered to be
equivalent to the sample size in our controlled sensing
problem. Therefore, the quantization rules can be
considered to be special  cases of control actions that
can affect the observations at the output of the
various sensors. However, the control actions in
the controlled sensing problem are much
more general.
Furthermore, the observation control policy in
\cite{tsit-mcss-1988} is effectively an open-loop
control policy.  In contrast, our main focus in this
paper  is on temporal observation control in which the
control at each time can be influenced by the past
observations.

In the fixed sample size setting, the block
channel coding problem with feedback and with a
{\em fixed number of messages} studied by Berlekamp
\cite{berl-phd-thes-1964} can also be considered to be
a special case of the controlled sensing
problem. This is because, in the coding problem,
the controller (encoder) has access to the hypothesis
(message), whereas in our controlled sensing problem
the controller is not assumed to have access to the
hypothesis and is therefore more challenging. 

The controlled sensing problem is also
more general than the
multi-channel identification problem treated
by Mitran and Kav\u{c}i\'{c}
\cite{mitr-kavc-ieeetit-2006}, in which there is a {\em
finite} constraint on the number of past channel
outputs
available to the input signal selector
at each time.
In contrast, the causal control policies
considered herein
can depend on the entire past observations, the
number of which becomes {\em
unbounded} as the horizon approaches infinity.  In
related work, Hayashi \cite{haya-ieeetit-2009}
considered the adaptive discrimination of channels with unbounded memory,
but for  only {\em two} channels,
i.e., two hypotheses.

In Section \ref{section-FSS}, we first present a characterization for the optimal error
exponent for binary hypothesis testing with a fixed
sample size showing that a pure stationary open-loop
control, where the control value at each time is fixed
and does not depend on past measurements and past
controls, achieves the optimal error exponent among the
class of causal controls.
In fact, this result is in agreement with
that of Hayashi on discrimination of two channels
\cite{haya-ieeetit-2009}
(see also Footnote \ref{footnote-1}).
Then, for general multiple hypothesis testing with a
fixed sample size, we derive a characterization for the
optimal error exponent achievable by open-loop control.
With more than two hypotheses, the characterization for
the optimal error exponent achievable by causal control
(which can be a function of past measurements and past
controls) is a more difficult problem.  
Nevertheless, we show through a concrete
example with {\em only} three hypotheses that the
optimal causal control policy can be strictly better
than the optimal open-loop control policy.  We also
derive general lower and upper bounds for the optimal
error exponent achievable by causal control.

We now discuss related work in the sequential
setting. The problem of  sequential hypothesis testing
\emph{without control} was introduced by Wald
\cite{wald-seq-book-1947, wald-wolf-amstat-1948} and
studied in detail for the binary hypothesis case.  In
this work, the optimal expected values of the stopping
time
were characterized subject to constraints on the
probabilities of error under each hypothesis.  It was
shown that the Sequential Probability Ratio Test (SPRT)
is optimal, i.e., among all tests with the same power,
the SPRT requires on average the fewest
number of observations.  An extension to the
multihypothesis case was
considered in \cite{baum-veer-ieeetit-1994} where the
authors proposed a Multihypothesis SPRT (or MSPRT)
which was later shown to satisfy certain asymptotic
optimality conditions \cite{ veer-baum-ieeetit-1995,
drag-et-al-ieeetit-1999, drag-et-al-ieeetit-2000}.

The problem of sequential binary composite hypothesis
testing \emph{with observation control} was considered
by Chernoff \cite{cher-amstat-1959} and an
asymptotically optimal sequential test was presented.
While Wald's SPRT is optimal in the sense that it
minimizes the 
expected values of the stopping time
among all tests 
for which the probabilities of error do
not exceed predefined thresholds
\cite{wald-wolf-amstat-1948}, a weaker notion of
optimality is 
adopted
in \cite{cher-amstat-1959}.
Specifically, the proposed test is shown to achieve
optimal expected values of the stopping time subject to
the constraints of {\em vanishing} probabilities of
error under each hypothesis. 
The sequential
test with causal control proposed by Chernoff can only
be proven to be asymptotically optimal
under under a set of positivity constraints
on the Kullback-Leibler distances as defined in
(\ref{eqn-SS-positivity-cond}).
Bessler \cite{bess-tech-repo-1960} generalized
Chernoff's work to general multiple hypothesis testing
but also imposed the same type of assumption on the
model.\footnote{We would like to
thank an anonymous reviewer for pointing us to the
generalization of Chernoff's test to the $M>2$ case in
Bessler's dissertation.}

Burna\v{s}ev \cite{burn-math-ussr-izve-1980} considered
the problem of sequential discrimination of multiple
hypotheses with control of observations under a
different information structure. It is important to
note that the controlled sensing problem that we
consider is fundamentally different from Burna\v{s}ev's
problem. Unlike \cite{burn-math-ussr-izve-1980}, where
the control actions are functions of the underlying
hypothesis, in \cite{cher-amstat-1959} and the setting
we consider herein the control actions cannot be
functions of the unknown hypothesis. In that sense, the
problem considered in \cite{burn-math-ussr-izve-1980}
has a simpler structure since the controller knows the
underlying hypothesis. This knowledge simplifies both
the optimization of control policies as well as their
performance analysis. When the hypothesis is unknown to
the controller, as in the controlled sensing problem
considered herein, the controller has to base its
control actions on estimates of the unknown hypothesis.

A Bayesian version of this sequential problem (with
observation control) was considered by the authors in
\cite{nagh-javi-isit-2010} in the non-asymptotic
regime. Since the optimal policy is generally difficult
to characterize, certain conditions (Blackwell ordering
\cite{blac-amstat-1953}) were identified under which
the optimal control is an open-loop control. The main
focus of \cite{nagh-javi-isit-2010, nagh-javi-asilomar-2010, 
nagh-javi-isit-2011, nagh-javi-ciss-2012} has been on trying 
to solve the underlying dynamic program and finding the
structure of optimal solutions, a task that is
only possible
in some special cases. In contrast, our work mostly
focuses on performance analysis and on establishing the
asymptotic optimality of proposed control policies.

In Section \ref{section-SS}, we extend the
results in \cite{cher-amstat-1959, bess-tech-repo-1960}
in several directions. First, we show that the
sequential test in \cite{cher-amstat-1959,
bess-tech-repo-1960} is asymptotically optimal
in a {\em strong} sense, using the notion of frequentist risks
in place of the probability of error. Second, we
dispense with the positivity assumption on the Kullback-Leibler divergences used in \cite{cher-amstat-1959,
bess-tech-repo-1960},   by constructing a modification to Chernoff's test that does not require this assumption. Third, we construct a further modification to the test that meets hard constraints on the frequentist risks, while
retaining asymptotic optimality.

\subsection{Paper Outline}

The remainder of the paper is organized as follows. In
Section \ref{section-Prelim}, we specify the general
notations and assumptions that will be adopted
throughout the paper. Our problem formulations and
results for the fixed sample size setting and the
sequential setting, together with a summary
of our contributions in each case, are given in
Section \ref{section-FSS} and \ref{section-SS},
respectively;
An example is provided in Section
\ref{section-Example}. A discussion is  provided in
Section \ref{section-Discussion}, and conclusions are
given in Section \ref{section-Conclusion}. 

\section{ Preliminaries }
\label{section-Prelim}

Throughout the paper, random variables are denoted by
capital letters and their realizations are denoted by
the corresponding lower-case letters.

Consider hypothesis testing with $M$ hypotheses, with
the set of hypotheses denoted by $\,\mathcal{M}
\triangleq \left \{ 0, \ldots, M-1 \right \}.$  At each
time step, the observation takes values in
$\,\mathcal{Y}\,$ and the control takes values in
$\,\mathcal{U}\,$. We assume that the control alphabet
$\,\mathcal{U}\,$ is {\em finite}. The observation
alphabet $\,\mathcal{Y}\,$ is a measurable space; it
can be either continuous, i.e., a finite-dimensional
Euclidean space, or discrete. Under each hypothesis
$\,i \in \mathcal{M},\,$ and at each time $k,$
conditioning on the event that the current control
$u_k$ has value $u,$ the current observation $Y_k$ is
assumed to be conditionally independent of past
observations and past controls $\left( y^{k-1}, u^{k-1}
\right) \triangleq \left( \left( y_1, \ldots, y_{k-1}
\right), \left( u_1, \ldots, u_{k-1} \right) \right).$

\noindent
We refer to this (conditionally) memoryless assumption
as the stationary Markovity assumption.

The following technical assumptions are made throughout
the paper. First, for every $\,u \in \mathcal{U},\,$ we
assume that the distributions of the observations under
each hypothesis $i \in \mathcal{M}$ are absolutely
continuous with respect to a common distribution
$\mu_u$ on $\,\mathcal{Y}.\,$  Consequently, for every
$u \in \mathcal{U}$ and every $\,i \in \mathcal{M},\,$
there exists a probability density function
(pdf)/probability mass function (pmf) $p_i^u$
such that for
every measurable set $\,A \subseteq \mathcal{Y},$
\begin{align}
    \mathbb{P}_i^u \left \{ Y \in A \right \} \ =\  \int_{ A } ~p_i^u \left( y \right) \,d \mu_u \left( y \right),\ \
    u \in \mathcal{U},
\label{eqn-assumption-1}
\end{align}
where the notation $\mathbb{P}_i^u$ denotes the
probability measure with respect to the distribution
$p_i^u$. Second, we also assume that for every $u \in
\mathcal{U}$ and every pair $i, j \in \mathcal{M},\ i
\neq j,$
\begin{equation}
    \mathbb{E}_i^u \left[ \left( \log{ \left( \frac{p_i^u \left( Y \right)}{p_j^u \left( Y \right)} \right)} \right)^2 \right ]
    \ <\ \infty,
\label{eqn-assumption-2}
\end{equation}
where the notation $\mathbb{E}_i^u$ denotes an
expectation with respect to $p_i^u.$  Note that it
follows from (\ref{eqn-assumption-2}) that for every $u
\in \mathcal{U}$ and every pair $i, j \in \mathcal{M},\
i \neq j,\ p_i^u$ is absolutely continuous with respect
to $p_j^u$.  However, for $u, u' \in \mathcal{U},\ u
\neq u',$ and $i, j \in \mathcal{M},\ p_i^u$ need not
be absolutely continuous with respect to $p_j^{u'}.$
For a finite $\mathcal{Y},$ the combination of
(\ref{eqn-assumption-2}) and the first assumption is
tantamount to the condition that all pmfs in the
collection $\left \{ p_i^u \right \}_{i \in
\mathcal{M}}$ have the same support. However, the
support could be different for different values of $u$.

\section{Fixed Sample Size Setting}
\label{section-FSS}

In this section, we first consider the setting wherein
the sample size is fixed a priori, i.e., it does not
depend on specific realizations of the observations and
controls.

We consider two classes of control policies based on
two information patterns.  The first is the open-loop
control policy where the (possibly randomized) control
sequence $\left( U_1, \ldots, U_n \right)$ is assumed
to be independent of the observations $\left( Y_1,
\ldots, Y_n \right).$ The second is the causal control
policy where at each time $k,$ the control $U_k$ can be
any (possibly randomized) function of past observations
and past controls, i.e., $U_k,\ k = 2, 3, \ldots, n,$
is described by an arbitrary conditional pmf $q_k
\left( u_k \vert y^{k-1}, u^{k-1} \right),$ and $U_1$
is distributed according to a pmf $q_1 \left( u_1
\right)$.  If all these (conditional) pmfs are
point-mass distributions, i.e., the current control is
a deterministic function of past observations and past
controls, then the resulting policy is a {\em pure}
control policy.  Under the aforementioned stationary
Markovity assumption,
the joint probability distribution function of 
$\left(
Y^n, U^n \right)$ under each hypothesis $i,$ denoted by
$p_i \left( y^n, u^n \right),$ can be written as
\begin{align}
    p_i \left( y^n, u^n \right)
    \,\triangleq\, q_1 \left( u_1 \right)
    \prod_{k=1}^n p_i^{u_k} \left( y_k \right)
    \prod_{k=2}^n q_k \left( u_k \vert y^{k-1}, u^{k-1} \right).
    \label{eqn:FSS-CSdistribution-hypo-i}
\end{align}
For open-loop control, $q_k \left( u_k \vert y^{k-1},
u^{k-1} \right)$ is (conditionally) 
independent
of $y^{k-1}$;
hence,
\begin{align}
p_i \left( y^n, u^n \right)
    &~=~
    \left(
    \prod_{k=1}^n p_i^{u_k} \left( y_k \right)
    \right)
    \left(
    q_1 \left( u_1 \right)
    \prod_{k=2}^n q_k \left( u_k \vert u^{k-1} \right)
    \right)	\nonumber \\
    &~=~ \left(
    \prod_{k=1}^n p_i^{u_k} \left( y_k \right)
    \right)
    q \left( u^n \right).
    \label{eqn:FSS-OLdistribution-hypo-i}
\end{align}

After $n$ observations, a decision is made about the
hypothesis according to the rule $\,\delta:
\mathcal{Y}^n \times \mathcal{U}^n \rightarrow
\mathcal{M}\,$ with maximal error probability: $e \left(
\left \{q_k \right \}_{k=1}^n, \left \{ p_i^u \right
\}_{i \in \mathcal{M}}^{u \in \mathcal{U}}, \delta
\right) \triangleq \max\limits_{i \in \mathcal{M}}
~\mathbb{P}_i \left \{ \delta \neq i \right \}. $ Note
that for a pure control policy, $u^n$ is either a fixed
sequence (pure open-loop control) or a deterministic
function of the observations $y^n$ (pure causal
control).  Consequently, when a pure control policy is
adopted, it suffices to consider a decision rule that
is a function only of the observations, i.e., $\delta
\left( y^n, u^n \right) = \delta \left( y^n \right).$
The combination of a control policy and a decision rule
will be referred to as a {\em test}. The asymptotic
quantities of interest will be the largest exponent of
the maximal error probability achievable by open-loop
control, denoted by $\betaOL,$ and by causal control,
denoted by $\betaC$, respectively.  In particular,
\begin{align}
    \hspace{-0.15in}
    \betaOL  \triangleq~
    \mathop{\overline{\mbox{lim}}}_{n}\
    \sup_{\delta,\ q \left ( u^n \right)}\
    \ - \frac{1}{n} \log
        \left(
            e \left( q \left( u^n \right), \left \{ p_i^u \right \}_{i \in \mathcal{M}}^{u \in \mathcal{U}}, \delta \right)
        \right);
    \nonumber
\end{align}
\begin{align}
\betaC 
~\triangleq~
\mathop{\overline{\mbox{lim}}}_{n}\
    \hspace{-0.30in}
    \mathop{\sup_{
    \delta,\ q_1 \left( u_1 \right)}}_
    {\left \{ q_k \left( u_k \vert y^{k-1}, u^{k-1} \right) \right \}_{k=2}^n }
    \hspace{-0.35in}
    - \frac{1}{n} \log
        \left(
            e \left( \left \{ q_k \right \}_{k=1}^n, \left \{ p_i^u \right \}_{i \in \mathcal{M}}^{u \in \mathcal{U}}, \delta \right)
        \right). \nonumber
\end{align}
It follows immediately from these definitions that
$\betaOL \leq \betaC,$ as the information pattern
associated with causal control is more informative than
that associated with open-loop control.  We also seek
to characterize the optimal control policies that
achieve the optimal error exponents.

Note that because the number of hypotheses is fixed, we
can consider a Bayesian probability of error (with
respect to any prior probability distribution of the
hypothesis) instead of the maximal one in the
definitions of the optimal error exponents without
changing their optimal values.

Before moving on to the technical part, we
first summarize our contributions in this section.
\begin{itemize}
\item  We derive a characterization for the optimal error exponent achievable by open-loop control for general multiple hypothesis testing with a fixed sample size (see Remark \ref{remark-openloop-FSS} explaining the connection between this result and  previous work \cite{mitr-kavc-ieeetit-2006}).
\item We propose a test for general multiple hypothesis testing with a fixed sample size using a causal control policy that chooses the control value based on a 
suitable
Chernoff information.  We also derive general lower and upper bounds for the optimal error exponent achievable by causal control that holds for any number of hypotheses, and illustrate through a canonical example with {\em only three} hypotheses that causal control can outperform open-loop control.
\end{itemize}

\subsection{The Case of Binary Hypothesis Testing $\left(
M = 2 \right)$}

For $p_1$ and $p_2$ that are pdfs/pmfs on $\mathcal{Y}$ with respect to a common distribution $\lambda$, the
Kullback-Leibler (KL) distance
of $p_1$ and
$p_2,$ denoted by $D\left( p_1 \| p_2 \right),$ is
defined as
\begin{equation}
    D \left( p_1 \| p_2 \right) \ \triangleq \
        \int_{y} ~p_1(y) \log{\left( \frac{p_1(y)}{p_2(y)} \right)} \,d \lambda \left( y \right).
    \nonumber
\end{equation}

We start with the following
characterizations for the largest error exponents
achievable by open-loop control and by causal control
in the case of binary hypothesis testing.

For any $u \in \mathcal{U}$ and any $s \in [0, 1]$,
consider the following pdf/pmf
\begin{align}
    b_s^u \left( y \right) &\ \triangleq\
    \frac{p_0^u \left( y \right)^s p_1^u \left( y \right)^{1-s}}
    {\int_{\overline{y}} ~p_0^u \left( \overline{y} \right)^s p_1^u \left( \overline{y} \right)^{1-s}   \,d \mu_u \left( \overline{y} \right)
    },
    \label{eqn-FSS-projected-distribution} 
\end{align}
and also let
\begin{align}
s^* \left( u \right) &\ \triangleq\ 
\mathop{\mathrm{argmax}}_{s \in [0, 1]} \ -\log \left(
\int_y ~ p_0^u \left( y \right)^s
        p_1^u \left( y\right)^{1-s} \,d \mu_u \left( y \right)
  \right).
  \nonumber
\end{align}

\begin{proposition}
\label{prop-1} For $M = 2,$ it holds that\footnote{
\label{footnote-1} Although this result is mathematically equivalent to \cite[Theorem
1]{haya-ieeetit-2009} on discrimination of two channels, we point out here that
 the ``discrimination'' problem is motivated by the channel coding 
problem (see  \cite{burn-math-ussr-izve-1980}), in which the controller
can be considered to know the true hypothesis. }

\begin{align}
    \betaOL &~=~ \betaC \nonumber \\
    &~=~ \max_{u \in \mathcal{U}}\
    \max_{s \in \left [ 0, 1 \right ]}
\,-\log \left( \int_y p_0^u \left( y \right)^s p_1^u
\left( y \right)^{1-s} d \mu_u \left( y \right) \right)
    \label{eqn:FSS-opt-Chernoff-exponent-A} \\
    &~=~ \max_{u \in \mathcal{U}} \
            D \left( b_{s^* \left( u \right)}^u \| p_0^u  \right)
    ~=~ \max_{u \in \mathcal{U}} D \left( b_{s^* \left( u \right)}^u \| p_1^u \right).
    \label{eqn:FSS-opt-Chernoff-exponent-B}
\end{align}
\end{proposition}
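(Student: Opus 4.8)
The plan is to establish the chain
\[
\max_{u}\phi_u\!\big(s^*(u)\big)\ \le\ \betaOL\ \le\ \betaC\ \le\ \max_{u}\phi_u\!\big(s^*(u)\big),
\]
where I abbreviate $\rho_u(s):=\int_y p_0^u(y)^s p_1^u(y)^{1-s}\,d\mu_u(y)$, $\phi_u(s):=-\log\rho_u(s)$ (so $\max_s\phi_u(s)=\phi_u(s^*(u))$ is the Chernoff information of $p_0^u,p_1^u$), and $C^\ast:=\max_u\phi_u(s^*(u))$; the middle inequality is immediate from the definitions. The two equalities in (\ref{eqn:FSS-opt-Chernoff-exponent-B}) are routine Chernoff-information calculus: for a control with $p_0^u\ne p_1^u$ the function $\phi_u$ is strictly concave on $[0,1]$ with $\phi_u'(0)=D(p_1^u\|p_0^u)>0$ and $\phi_u'(1)=-D(p_0^u\|p_1^u)<0$, so $s^*(u)$ is its unique interior stationary point, and a one-line computation gives $D(b^u_s\|p_0^u)=(1-s)\phi_u'(s)+\phi_u(s)$ and $D(b^u_s\|p_1^u)=-s\phi_u'(s)+\phi_u(s)$, both equal to $\phi_u(s^*(u))$ at $s=s^*(u)$; a control with $p_0^u=p_1^u$ contributes $0$ to every term. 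Since the paper has already noted that the optimal exponents are unchanged when the maximal error probability is replaced by the Bayesian error with the uniform prior, the lower bound $\betaOL\ge C^\ast$ follows from the pure stationary open-loop test that repeats a control $u^\ast$ attaining $C^\ast$ together with the likelihood-ratio rule: with i.i.d.\ observations, $\min(a,b)\le a^{s}b^{1-s}$ yields $P_e^{\mathrm{Bayes}}\le\tfrac12\,\rho_{u^\ast}(s^*(u^\ast))^{\,n}=\tfrac12 e^{-nC^\ast}$.

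The substance is the causal converse $\betaC\le C^\ast$. A standard averaging argument (condition on the external randomization) reduces this to \emph{pure} causal policies, for which $U_k$ is a deterministic function of $Y^{k-1}$ and the decision rule may be taken to depend on $Y^n$ only. Fix such a policy and a rule $\delta$ with decision regions $\mathcal A_0,\mathcal A_1$, and introduce the ``balanced'' law $R$ under which $(Y^n,U^n)$ has the \emph{same} control kernels but, given $U_k=u$, $Y_k$ is drawn from the tilted density $b_{s^*(u)}^u$ of (\ref{eqn-FSS-projected-distribution}); note $P_0,P_1\ll R$ because all three densities have a common support, a consequence of (\ref{eqn-assumption-2}). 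Writing $Z_k:=\log\!\big(p_0^{U_k}(Y_k)/p_1^{U_k}(Y_k)\big)$, the control kernels cancel in the likelihood ratios, giving
\[
\frac{dP_0}{dR}=\exp\!\Big(\textstyle\sum_k (1-s^*(U_k))Z_k-\sum_k\phi_{U_k}(s^*(U_k))\Big),\qquad \frac{dP_1}{dR}=\exp\!\Big(-\textstyle\sum_k s^*(U_k)Z_k-\sum_k\phi_{U_k}(s^*(U_k))\Big),
\]
and $\sum_k\phi_{U_k}(s^*(U_k))\le nC^\ast$ since each term is at most $C^\ast$.

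The key observation is that under $R$, conditionally on the natural filtration $\mathcal F_{k-1}$ the control $U_k$ is determined and $E_R[Z_k\mid\mathcal F_{k-1}]=E_{b_{s^*(U_k)}^{U_k}}[\log(p_0^{U_k}/p_1^{U_k})]=0$ by the interior stationarity of $s^*(U_k)$; hence the partial sums of $W_k:=(1-s^*(U_k))Z_k$ and of $V_k:=s^*(U_k)Z_k$ are $R$-martingales whose increments have conditional second moment at most $\sigma^2:=\max_u E_{b_{s^*(u)}^u}[(\log(p_0^u/p_1^u))^2]$, and $\sigma^2<\infty$ because $\mathcal U$ is finite, $b_s^u\le\max(p_0^u,p_1^u)/\rho_u(s)$ with $\rho_u(s)$ bounded away from $0$ on $[0,1]$, and (\ref{eqn-assumption-2}) holds. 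Thus $\mathrm{Var}_R(\sum_k W_k),\ \mathrm{Var}_R(\sum_k V_k)\le n\sigma^2$, so Chebyshev's inequality gives $R(\sum_k W_k\ge -c)\ge 1-\epsilon$ and $R(\sum_k V_k\le c)\ge 1-\epsilon$ with $c:=\sqrt{n\sigma^2/\epsilon}$. Splitting the Bayesian error over $\mathcal A_0,\mathcal A_1$, changing measure to $R$, and using $\sum_k\phi_{U_k}(s^*(U_k))\le nC^\ast$,
\[
P_e^{\mathrm{Bayes}}\ \ge\ \tfrac12 e^{-nC^\ast}\!\left(\int_{\mathcal A_1}\!e^{\sum_k W_k}\,dR+\int_{\mathcal A_0}\!e^{-\sum_k V_k}\,dR\right)\ \ge\ \tfrac12 e^{-nC^\ast}e^{-c}\,R\big(\{\textstyle\sum_k W_k\ge -c\}\cap\{\sum_k V_k\le c\}\big)\ \ge\ \tfrac12 e^{-nC^\ast}e^{-c}(1-2\epsilon).
\]
Hence $-\tfrac1n\log P_e^{\mathrm{Bayes}}\le C^\ast+\sqrt{\sigma^2/(n\epsilon)}+o(1)$, whose $\varlimsup_n$ is $C^\ast$; as this holds for every causal policy and every $\delta$, $\betaC\le C^\ast$, closing the chain.

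I expect the causal converse to be the only genuine obstacle, and within it the fact that a causal control destroys the product-over-time structure of the observation law; the role of the change of measure to $R$ is precisely to restore a martingale-difference structure for the tilted log-likelihood ratios, after which a crude $O(\sqrt n)$ second-moment estimate suffices, and the finiteness of $\mathcal U$ together with the second-moment hypothesis (\ref{eqn-assumption-2}) are exactly what make the controlling variance $\sigma^2$ finite and uniform. Reducing randomized policies to pure ones, checking the interiority of $s^*(u)$ for non-degenerate controls, and disposing of the degenerate controls with $p_0^u=p_1^u$ are all routine.
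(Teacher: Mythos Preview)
Your proof is correct. The paper does not actually include a self-contained proof of Proposition~\ref{prop-1}: the footnote identifies it as equivalent to \cite[Theorem~1]{haya-ieeetit-2009}, and later (in the proof of the upper bound of Theorem~\ref{Thm3}) the causal converse is simply invoked as ``the converse part'' of this proposition. The closest the paper comes is the converse of Theorem~\ref{Thm2} for open-loop control, which also proceeds by changing measure to a tilted law and then using a martingale stability argument (Lo\`eve's theorem) to control the log-likelihood ratio. That argument, however, fixes a \emph{single} tilting parameter $s^*$ determined by the empirical distribution of the deterministic sequence $u^n$ (cf.\ (\ref{eqn-Pf-Thm2-argmax-s-OL})), which is only meaningful for open-loop policies.

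Your per-control tilting $s^*(u)$---so that under the balanced law $R$ the conditional mean of $Z_k$ vanishes whatever control the causal policy selects---is precisely the device needed to extend the change-of-measure converse to causal policies; combined with the elementary Chebyshev bound on the two martingales $\sum_k W_k$ and $\sum_k V_k$, it yields a clean, uniform-in-policy lower bound on the Bayesian error of the form $\tfrac12 e^{-nC^\ast}e^{-O(\sqrt n)}(1-2\epsilon)$. In short, you supply the causal converse that the paper outsources to Hayashi, and your argument is the natural sharpening of the paper's open-loop change-of-measure technique; it is a bit more quantitative (Chebyshev rather than a.s.\ martingale convergence) and avoids the case split in (\ref{eqn-Pf-Thm2-conv-case}) by bounding the Bayesian error directly over the partition $\mathcal A_0\cup\mathcal A_1$.
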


\begin{remark}
For each fixed $u \in \mathcal{U}$, the quantity
\begin{equation}
C \left( p_0^u, p_1^u \right) ~\triangleq~
\mathop{\mathrm{max}}_{s \in \left [ 0, 1 \right ]}
~-\log \left( \int_y ~p_0^u \left( y \right)^s p_1^u
\left( y \right)^{1-s} \,d \mu_u \left( y \right)
\right) \label{eqn-def-Chernoff-information}
\end{equation}
is called the ``Chernoff information'' of $\,p_0^u\,$
and $\,p_1^u.\,$
Consequently, 
Proposition \ref{prop-1}
(cf. (\ref{eqn:FSS-opt-Chernoff-exponent-A})) states that
the optimal error exponent
is the maximum Chernoff information over the choice of
controls.
\end{remark}
\begin{remark}
It follows from Proposition \ref{prop-1} and the
result on the Chernoff information for i.i.d.
observations that the above optimal error exponent is
achievable by a pure stationary open-loop control
sequence in which, for every $k = 1, \ldots, n,\ u_k =
u^*,$ where $u^*$ is the maximizer associated with the
right-side of (\ref{eqn:FSS-opt-Chernoff-exponent-A})
(or, identically, with the two (maximizing)
optimization problems in
(\ref{eqn:FSS-opt-Chernoff-exponent-B})).  In
particular, {\em information from the past and
randomization are superfluous for attaining the best
error exponent for fixed sample size binary hypothesis testing.}
\end{remark}

\subsection{The Case of Multiple Hypothesis Testing $\left(
M > 2 \right)$}

\subsubsection{Open-loop Control}

Our first theorem pertains to the situation
with open-loop control.

\vspace{0.05in}
\begin{theorem}
\label{Thm2} For $M > 2,$ it holds that
\begin{align}
 \betaOL ~=~ \hspace{3.02in}		\nonumber \\
    \max_{q \left( u \right)}
    \min_{i \neq j}  	
    \max_{s \in \left [ 0, 1 \right ]}
    \hspace{-0.03in}
    -
    \hspace{-0.03in}
    \sum_{u}  q \left( u \right)
           \log \left(
           \int_{y} p_i^u \left( y \right)^s
           p_j^u \left( y \right)^{1-s} d \mu_u \left( y \right)
           \right), \label{eqn:FSS-multiplehypo-opt-OLexponents}
\end{align}
where the left-most maximization is over all pmfs $q$
on $\mathcal{U}$ and the minimization is over all pairs
of hypotheses $i, j,\, i \neq j$. Furthermore,
$\betaOL$ is
achievable by pure (non-randomized) control.
\end{theorem}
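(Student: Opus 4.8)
The plan is to prove the two matching bounds $\betaOL \le \beta^{\star}$ and $\betaOL \ge \beta^{\star}$, where $\beta^{\star}$ denotes the right-hand side of (\ref{eqn:FSS-multiplehypo-opt-OLexponents}); write $C_{q}(i,j) \triangleq \max_{s\in[0,1]}\,-\sum_{u} q(u)\log\!\big(\int_{y} p_{i}^{u}(y)^{s} p_{j}^{u}(y)^{1-s}\,d\mu_{u}(y)\big)$, so that $\beta^{\star} = \max_{q}\min_{i\neq j} C_{q}(i,j)$, the outer maximum running over pmfs $q$ on the finite set $\mathcal{U}$. That maximum is attained because the probability simplex on $\mathcal{U}$ is compact and $q\mapsto\min_{i\neq j}C_{q}(i,j)$ is continuous (a finite minimum of the maps $q\mapsto C_{q}(i,j)$, each continuous since the inner maximand is jointly continuous and $[0,1]$ is compact); fix a maximizer $q^{\star}$. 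The essential simplification when $M>2$ is that, up to a factor polynomial in $n$, the maximal (equivalently, as observed earlier, the Bayesian) error probability is controlled by the single hardest pair of hypotheses, so everything reduces to the binary Chernoff analysis already underlying Proposition~\ref{prop-1}, now applied to a non-i.i.d.\ block of $n$ observations whose empirical distribution of control values converges to the relevant pmf.

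For achievability (and the ``furthermore'' clause simultaneously), take the deterministic control sequence $u^{n}$ that uses each value $u$ exactly $\lfloor nq^{\star}(u)\rfloor$ times (up to a negligible remainder), so its empirical type $\hat q_{u^{n}}$ satisfies $\hat q_{u^{n}}(u)\to q^{\star}(u)$, and use the maximum-likelihood decision rule. Under hypothesis $i$ the observations are independent with densities $p_{i}^{u_{k}}$; the event $\{\hat H=j\}$ is contained in $\{p_{j}(Y^{n})\ge p_{i}(Y^{n})\}$, and a Chernoff (Markov) bound on the latter gives, for every $s\in[0,1]$, $\mathbb{P}_{i}\{\hat H=j\}\le\prod_{u}\big(\int_{y} p_{i}^{u}(y)^{s} p_{j}^{u}(y)^{1-s}\,d\mu_{u}(y)\big)^{n\hat q_{u^{n}}(u)}$, hence $\mathbb{P}_{i}\{\hat H=j\}\le\exp(-n\,C_{\hat q_{u^{n}}}(i,j))$. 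Since $C_{\hat q_{u^{n}}}(i,j)\to C_{q^{\star}}(i,j)\ge\min_{i'\neq j'}C_{q^{\star}}(i',j')=\beta^{\star}$, a union bound over the $M-1$ competitors yields maximal error probability at most $(M-1)\exp(-n[\beta^{\star}-o(1)])$. The control policy is a fixed sequence, hence pure (non-randomized) and open-loop, which is the ``furthermore'' claim.

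For the converse, take any open-loop policy with control law $q(u^{n})$ and any decision rule $\delta$. Averaging over the control sequence and then over hypotheses, the maximal error $e_{n}$ obeys $e_{n}\ge\sum_{u^{n}}q(u^{n})\big(\tfrac{1}{M}\sum_{i}\mathbb{P}_{i}\{\delta\neq i\mid u^{n}\}\big)\ge\sum_{u^{n}}q(u^{n})\,P_{e}^{\star}(u^{n})$, where $P_{e}^{\star}(u^{n})$ is the smallest Bayesian error over all decision rules for the fixed control sequence $u^{n}$. The key lemma is a multihypothesis Chernoff converse \emph{uniform over control sequences}: $P_{e}^{\star}(u^{n})\ge c_{n}\exp(-n[\min_{i\neq j}C_{\hat q_{u^{n}}}(i,j)+\varepsilon_{n}])$ with $c_{n}$ subexponential and $\varepsilon_{n}\to0$ depending on $n$ only. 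Granting this, and using $\min_{i\neq j}C_{\hat q_{u^{n}}}(i,j)\le\beta^{\star}$ for every $u^{n}$, we get $e_{n}\ge c_{n}\exp(-n[\beta^{\star}+\varepsilon_{n}])$ for every $\delta$ and $q(u^{n})$, whence $\betaOL\le\beta^{\star}$ after taking $-\tfrac{1}{n}\log$, the supremum over tests, and the limit superior.

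The key lemma itself follows by collapsing the optimal rule for $u^{n}$ to a binary test $\tilde\delta$ between the pair $(i^{\star},j^{\star})$ attaining $\min_{i\neq j}C_{\hat q_{u^{n}}}(i,j)$ — decide $i^{\star}$ on $\{\hat H=i^{\star}\}$ and $j^{\star}$ otherwise — so that both of its error probabilities are at most $M\,P_{e}^{\star}(u^{n})$, and then invoking the binary Chernoff converse for the block with control values $u^{n}$: $\mathbb{P}_{i^{\star}}\{\tilde\delta=j^{\star}\}+\mathbb{P}_{j^{\star}}\{\tilde\delta=i^{\star}\}\ge\int\min(p_{i^{\star}}(y^{n}),p_{j^{\star}}(y^{n}))\ge\exp(-n[C_{\hat q_{u^{n}}}(i^{\star},j^{\star})+\varepsilon_{n}])$, via a change of measure to the optimally tilted distribution followed by a law of large numbers. \textbf{The main obstacle is exactly the uniformity of $\varepsilon_{n}$}: the tilting/LLN estimate must hold with the same rate of decay for \emph{all} empirical types $\hat q_{u^{n}}$ at once. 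This is where the finiteness of $\mathcal{U}$ (only finitely many distinct conditional laws $\{p_{i}^{u}\}$, and only polynomially many types of length $n$) and Assumption~(\ref{eqn-assumption-2}) (finite second moments of the log-likelihood ratios, hence of their exponential tilts, which both excludes degenerate pairs and makes the Berry--Esseen-type control uniform) are essential, precisely as in the binary analysis behind Proposition~\ref{prop-1}. The remaining manipulations (ties in the ML rule, absorbing polynomial factors, and the case where some pair is not distinguished by $u^{n}$, for which the bound is trivial) are routine.
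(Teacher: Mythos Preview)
Your proposal is correct and follows essentially the same route as the paper: achievability via a deterministic control sequence whose type approximates the optimizing $q^\star$, together with the ML rule and a pairwise Chernoff bound; converse via reduction to the hardest binary pair and a change-of-measure to the optimally tilted distribution, with the uniformity in $n$ supplied by the finiteness of $\mathcal{U}$ and the second-moment assumption~(\ref{eqn-assumption-2}). The only cosmetic differences are that the paper restricts to pure controls at the outset (rather than averaging over $q(u^n)$ as you do) and obtains the binary lower bound via the case split $\tilde{\mathbb{P}}\{\delta=i\}\gtrless\tfrac12$ under the tilted law rather than via the $\int\min(p_{i^\star},p_{j^\star})$ bound; both lead to the same conclusion.
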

\begin{remark}
\label{remark-openloop-FSS}
For finite observation alphabets, $ \betaOL$ 
in (\ref{eqn:FSS-multiplehypo-opt-OLexponents}) can be shown to be equal to the alternative formula derived in \cite[Theorem 5]{mitr-kavc-ieeetit-2006}.  However, our formula in (\ref{eqn:FSS-multiplehypo-opt-OLexponents}) is simpler than that in \cite[Theorem 5]{mitr-kavc-ieeetit-2006} because it involves maximization over a single real-valued spurious parameter $s$ instead of minimization over a conditional distribution as in \cite[Theorem 5]{mitr-kavc-ieeetit-2006}.  More importantly, our result applies also to general observation alphabets not just the finite ones.
\end{remark}

\subsubsection{Causal Control}
\label{subsubsec-FSS-CS}

A natural question that arises now
is whether
causal control can yield a larger error exponent than
open-loop control when $M > 2$.  
The answer will be shown to be affirmative
even for $M = 3$.  To this end,
we now propose a test with 
{\em pure} causal control
(we show in Theorem \ref{Thm3} below
that pure causal control
does
achieve the optimal error exponent).

Our test admits the following recursive description and
is based on the use of the posterior distribution of
the hypothesis as a sufficient statistic.
Having obtained the first 
$k$ observations
$y^k,$ we find
the maximum likelihood (ML) estimate of
the hypothesis, denoted by $\hat{i}_k \left( y^k
\right) = \mathop{\mbox{argmax}}_{i \in \mathcal{M}}~
p_i \left ( y^k \right ).$\footnote{In case of ties, we
pick, say, the hypothesis with the least numerical
value.} We adopt a pure control policy wherein $u_{k+1}
\in \mathcal{U}$ is selected as
\begin{align}
    u_{k+1} \,=\, u_{k+1} \left( \hat{i}_k \right)
    \,=\, \mathop{\mbox{argmax}}_{u \in \mathcal{U}}
    \min_{j \in \mathcal{M} \backslash \left \{ \hat{i}_k \right \} }
    C \left( p^u_{\hat{i}_k}  , p^u_j  \right),
    \label{eqn-FSS-opt-control}
\end{align}
where $C \left( p^u_{\hat{i}_k}  , p^u_j  \right)$ is
the Chernoff information of $\,p^u_{\hat{i}_k}\,$ and
$\,p^u_j\,$ defined in
(\ref{eqn-def-Chernoff-information}). Lastly, at the
final time $n,$ the decision rule is specified as
$\,\delta \left( y^n, u^n \right) ~=~ \hat{i}_n.\,$
{\em The proposed test follows the celebrated
separation principle between estimation and control;
while estimating the ML hypothesis is carried out
online, the control is chosen based on a stationary
deterministic mapping from the space of posterior
distributions to the control space, and hence, the
mapping can be fully specified offline.} It will be
shown in Section \ref{section-Example} that for the
special example with {\em only} three hypotheses, this
proposed test is superior to the best open-loop
control.  In general, we still do not know the
structure of the optimal causal control, and
characterizing the optimal error exponent for causal
control is a hard problem even for $M = 3$.
Nevertheless, we derive precise bounds on the optimal
error exponent that are applicable for any $M > 2$.
Note that the optimal error exponent achievable by
open-loop control as characterized in Theorem
\ref{Thm2} already serves as a lower bound for the
optimal error exponent achievable by causal control. We
also derive a new lower bound and an upper bound for
the optimal error exponent for causal control. These
bounds are stated in Theorem \ref{Thm3} for the fixed
sample size setting with $M > 2$. Although the lower
bound of Theorem \ref{Thm3} for $\betaC$ holds only for
a {\em finite} observation alphabet $\mathcal{Y}$, the
upper bound in Theorem \ref{Thm3} and all the previous
results are valid for an arbitrary $\mathcal{Y}$
(subject to assumptions (\ref{eqn-assumption-1}) and
(\ref{eqn-assumption-2}) in Section
\ref{section-Prelim}). As mentioned in Section
\ref{section-Prelim}, for a finite $\mathcal{Y}$, we
assume that for every $u \in \mathcal{U},$ the
collection of pmfs $\left \{ p_i^u \right \}_{i \in
\mathcal{M}}$ have the same support.

\vspace{0.05in} For any pmf $\nu$ on $\mathcal{M},$ any
$u \in \mathcal{U},$ let $\nu \circ p^u \left( \cdot
\right)$ denote the pmf/pdf (on $\mathcal{Y}$) $~\sum_i
\nu \left( i \right) p_i^u \left( y \right)$.
\begin{theorem}
\label{Thm3} For every finite $\mathcal{Y}$ and every
$M > 2,$ it holds that
\begin{align}
    \sup_{\eta > 0} \
    - \log{
    \left (
    \sup_{\nu}
    \min_{u}
    \max_i
    \left(
    \sum_{y} p_i^u \left( y \right)
    e^{
    \frac{
    \eta \left [
                \nu \circ p^u \left( y \right) - p_i^u \left (y \right)
                \right ]
    }
    {\left( 1 - \nu \left( i \right) \right)}
    }
    \right)
    \right )
    } \hspace{0.11in}
    \nonumber \\
    \hspace{-0.2in}
    \leq\ \ \ \betaC  \ \ \ \leq  \hspace{2.5in} 
    \nonumber 
\end{align}

\vspace{-0.3in}
\begin{align}
    \hspace{0.2in} 
    \min_{i \neq j}
    ~\max_{u}
    ~\max_{s \in \left [ 0, 1 \right ]}
           -\log \left(
           \sum_{y} p_i^u \left( y \right)^s
           p_j^u \left( y \right)^{1-s}
           \right), 
    \label{eqn:FSS-multiplehypo-opt-exponents}
\end{align}

\noindent where the outer supremum for the argument of
$-\log{}$ in the lower bound is over pmfs $\nu$ on
$\mathcal{M}$ that are not point-mass distributions and
the outer minimization for the upper bound in
(\ref{eqn:FSS-multiplehypo-opt-exponents}) is over all
pairs of hypotheses $i, j,\ i \neq j$. Furthermore, as
for $\betaOL$, the exponent $\betaC$ is also achievable
by pure control without any randomization.
\end{theorem}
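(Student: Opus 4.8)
The statement has three parts: the upper bound (a converse), the lower bound (achievability of the proposed test), and the reduction to pure control; I treat them in turn.

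\emph{Upper bound.} I would prove, for \emph{each} pair $i\neq j$, that $\betaC\le \max_u\max_{s\in[0,1]}\bigl(-\log\sum_y p_i^u(y)^sp_j^u(y)^{1-s}\bigr)$ by a pairwise reduction to binary hypothesis testing combined with Proposition~\ref{prop-1}. Fix any causal test $(\{q_k\},\delta)$ for the $M$-ary problem and a pair $i\neq j$, and let $\tilde\delta$ be the binary rule that declares $i$ when $\delta=i$ and declares $j$ otherwise. Then $\mathbb{P}_i\{\tilde\delta\neq i\}=\mathbb{P}_i\{\delta\neq i\}$ and $\mathbb{P}_j\{\tilde\delta\neq j\}=\mathbb{P}_j\{\delta=i\}\le\mathbb{P}_j\{\delta\neq j\}$, so the maximal error of the induced binary test is at most that of the original test, and its control policy is still causal. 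By Proposition~\ref{prop-1} the best binary causal exponent equals $\max_u C(p_i^u,p_j^u)$ (with $C$ as in \eqref{eqn-def-Chernoff-information}), hence $\betaC\le\max_u C(p_i^u,p_j^u)$; minimizing over pairs gives the right-hand side of \eqref{eqn:FSS-multiplehypo-opt-exponents}. This argument uses only the standing assumptions \eqref{eqn-assumption-1}--\eqref{eqn-assumption-2}, so it is valid for arbitrary $\mathcal{Y}$, as claimed.

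\emph{Lower bound.} Here I would analyze the proposed pure causal test (compute $\hat i_k$, choose $u_{k+1}=u_{k+1}(\hat i_k)$ by \eqref{eqn-FSS-opt-control}, decide $\hat i_n$). Under hypothesis $i$ one has $\{\hat i_n\neq i\}\subseteq\{p_i(Y^n)\le\sum_{m\neq i}p_m(Y^n)\}$, so for every $\eta>0$, Markov's inequality yields $\mathbb{P}_i\{\mathrm{err}\}\le\mathbb{E}_i\bigl[(\sum_{m\neq i}p_m(Y^n)/p_i(Y^n))^{\eta}\bigr]$, with $\eta$ unrestricted. The obstacle is that the control sequence is a random process adapted to $Y^n$, so this expectation does not factor over time. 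I would handle this by a change-of-measure argument that keeps explicit track of the adaptive control: partition sample paths according to the empirical type $\nu$ (a pmf on $\mathcal{M}$) of the ML-estimate sequence that drives the control, observe that there are only polynomially many such types so their union bound costs no exponent, and then bound, path class by path class, the expectation by iterated conditioning, using that given the past $Y_k\sim p_i^{u_{k+1}(\hat i_{k-1})}$. Collecting the resulting per-step Chernoff factors and invoking the maximin control rule \eqref{eqn-FSS-opt-control}, the exponent of each path class is at least $-\log\bigl(\min_u\max_i\sum_y p_i^u(y)\exp\{\eta[\nu\circ p^u(y)-p_i^u(y)]/(1-\nu(i))\}\bigr)$; worst-casing over the (non-degenerate) type $\nu$, taking $-\tfrac1n\log$ and $n\to\infty$, and finally optimizing over $\eta>0$ produces the lower bound in Theorem~\ref{Thm3}. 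Finiteness of $\mathcal{Y}$ and the common-support assumption enter precisely here, to guarantee finiteness of the exponential moments (recall $\eta$ may exceed $1$) and the validity of the type count; this is why the lower bound is restricted to finite $\mathcal{Y}$.

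\emph{Pure control suffices.} To see that $\betaC$ is attained without randomization I would use the fact, already noted in the text, that the maximal- and Bayesian-error exponents coincide. Realize any randomized causal policy via an external seed $W^n$ independent of $Y^n$; conditioned on $W^n=w^n$ both the policy and $\delta$ are pure, and the Bayesian error of the randomized test equals $\mathbb{E}_{W^n}$ of the Bayesian errors of the resulting pure tests. Hence some pure test has Bayesian error no larger than the randomized one, so for every $n$ the optimal Bayesian error over pure tests equals that over all causal tests; passing to the $\limsup$ shows $\betaC$ is achieved by pure control (and the test used for the lower bound is already pure, so that bound too is attained without randomization).

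\emph{Main obstacle.} The delicate step is the lower bound---converting the Markov/Chernoff estimate on $\mathbb{P}_i\{\mathrm{err}\}$, in which the control is observation-dependent, into the clean $\sup_{\eta>0}\,(-\log)\,\sup_\nu\min_u\max_i$ expression. The two ideas that make it work are: organizing the otherwise exponentially many sample paths by the type of the ML-estimate trajectory, which defeats a naive union bound; and verifying that the maximin control rule \eqref{eqn-FSS-opt-control}, paired with this specific exponential functional of $(\nu,\eta,u,i)$, is exactly what yields the stated nested optimization. Establishing that pairing is the crux of the argument.
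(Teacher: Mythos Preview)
Your upper bound argument matches the paper's: reduce pairwise to binary testing and invoke Proposition~\ref{prop-1}. Your argument that pure control suffices (realize randomization through an external seed and average the Bayesian error) is different from the paper's---which appeals to finite-horizon dynamic programming on the posterior with finite $\mathcal{U}$---but it is valid and reaches the same conclusion.

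The lower bound, however, has a genuine gap. You propose to analyze the test of \eqref{eqn-FSS-opt-control} via the Markov bound $\mathbb{P}_i\{\hat i_n\neq i\}\le\mathbb{E}_i\bigl[(\sum_{m\neq i}p_m(Y^n)/p_i(Y^n))^\eta\bigr]$, partition paths by the empirical type of the ML-estimate trajectory, and telescope. But the ratio $\sum_{m\neq i}p_m(Y^n)/p_i(Y^n)$ is a \emph{sum} of products over alternative hypotheses and does not factor over time, so iterated conditioning does not deliver per-step multiplicative factors. Furthermore, the $\nu$ appearing in the theorem is not the empirical type of anything; there is no mechanism by which the type of $(\hat i_0,\ldots,\hat i_{n-1})$ would emerge as the mixture weight in $\nu\circ p^u(y)$ together with the normalization $1/(1-\nu(i))$. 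Finally, the rule \eqref{eqn-FSS-opt-control} is tailored to Chernoff information, not to the functional $\sum_y p_i^u(y)\exp\{\eta[\nu\circ p^u(y)-p_i^u(y)]/(1-\nu(i))\}$, so ``invoking \eqref{eqn-FSS-opt-control}'' cannot produce the $\min_u$ in the stated bound.

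The paper's proof takes a quite different route. It does \emph{not} analyze the test of \eqref{eqn-FSS-opt-control}; instead it constructs, for each $\epsilon\in(0,1)$ and $\eta>0$, a separate test built on a \emph{mismatched} model $q_i^u(y)=\tfrac{1}{J}+\tfrac{\epsilon}{J}\bigl(Jp_i^u(y)-1\bigr)$, a perturbation of the uniform pmf on $\mathcal{Y}$. The decision is ML under this mismatched posterior $\nu_k$, and the control at time $k$ is chosen as a function of $\nu_{k-1}$ so as to minimize over $u$ the maximum over $i$ of the one-step factor $\mathbb{E}_i\bigl[(L_k(i)/L_{k-1}(i))^{-\eta/(J\epsilon)}\,\big\vert\,\mathcal{F}_{k-1}\bigr]$, where $L_k(i)=\nu_k(i)/(1-\nu_k(i))$. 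Because $L_n(i)$ telescopes as a product of these ratios, the bound $\mathbb{P}_i\{\delta\neq i\}\le\mathbb{E}_i[L_n(i)^{-\eta/(J\epsilon)}]$ \emph{does} factor; replacing $\nu_{k-1}$ by the worst non-degenerate pmf (legitimate since the perturbed model has full support) yields a uniform per-step bound. Sending $n\to\infty$ and then $\epsilon\to 0$ (via L'H\^opital) converts the per-step factor into the exponential in the theorem. Thus the $\nu$ in the bound is a posterior value, not an empirical type, and the control realizing the $\min_u$ is the one minimizing this per-step functional, not \eqref{eqn-FSS-opt-control}.
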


\begin{remark}
The optimization problem for the lower bound in (\ref{eqn:FSS-multiplehypo-opt-exponents}) can be handled off-line.  In the example below, we show that the value of this lower bound is strictly larger than $\betaOL$.
\end{remark}

\section{Sequential Setting}
\label{section-SS} 

In the previous section, we considered tests with a
fixed sample size.  In this section, we consider a
different setting in which the controller can
adaptively decide, based on the realizations of past
observations and past controls, whether to continue
collecting new observations, thereby deferring making a
final decision about the hypothesis until later time,
or to stop taking observations and make the final
decision. In this setting, the goal is to design a {\em
sequential} test to achieve the optimal tradeoff
between reliability, in terms of probability of error,
and delay or cost, in terms of the expected sample size
needed for decision making.  Unlike in the fixed
sample size setting in which the asymptotic analysis of
tests with open-loop control is easier than that of
tests with causal control, in the sequential setting,
the contrary situation seems to hold.  
In particular,
as we show below, the adoption of randomized {\em causal}
control in the sequential setting enables the
simultaneous minimization of the expected sample sizes
under the $M$ hypotheses as the error probability
vanishes.
In contrast, {\em open-loop} control does not enable 
such simultaneous minimization, and therefore the characterization of the 
associated  tradeoff is more difficult and remains open.
This is why
we only consider causal control in the sequential
setting.

We now summarize our contributions in this setting.
\begin{itemize}
\item  The existing sequential test originally proposed by
Chernoff \cite{cher-amstat-1959} for binary composite
hypothesis testing, and extended to the multihypothesis
setting by Bessler \cite{bess-tech-repo-1960}, can only
be proved to be asymptotically optimal under a certain
assumption on the distributions
((\ref{eqn-SS-positivity-cond}) below).  We first show
that under the same assumption this test, which we
refer to as the {\em Chernoff} test,  is asymptotically
optimal in a {\em strong sense},
using the notion of decision making risk in
place of the overall probability of error.
\item We dispense with the aforementioned assumption by using a
modified version of the Chernoff test
described in Appendix \ref{app-B-2}, where  we outline the 
achievability proof of asymptotic optimality without
(\ref{eqn-SS-positivity-cond}).
\item We design another test to meet hard risk constraints while retaining asymptotic optimality.
\end{itemize}

Let $\mathcal{F}_k$ denote the $\sigma$-field generated
by $\left(Y^k, U^k \right)$.  A sequential test $\gamma
= \left( \phi, N, \delta \right)$ consists of a causal
observation control policy $\phi,$ an
$\mathcal{F}_k$-stopping time $N$ representing the
(random) number of observations before the final
decision, and the decision rule $\delta = \delta \left(
Y^N, U^N \right)$.  Akin to the paragraph containing
(\ref{eqn:FSS-CSdistribution-hypo-i}), the causal
control policy $\phi$ is described by the pmfs $q
\left( u_1 \right), \left \{ q \left( u_k \vert
y^{k-1}, u^{k-1} \right) \right \}_{k = 2}^{\infty}.$

\subsection{The Chernoff Test}
\label{subsection-SS-test}

We first present the Chernoff test
\cite{cher-amstat-1959,bess-tech-repo-1960} for
sequential design of experiments with multiple
hypotheses.  The proof of asymptotic optimality of this
test requires the following technical
assumption which was also imposed in
\cite{cher-amstat-1959, bess-tech-repo-1960}: {\em For
every} $u \in \mathcal{U},~0 \leq i < j < M-1,$
\begin{align}
D \left( p_i^u \| p_j^u \right) ~>~ 0.
\label{eqn-SS-positivity-cond}
\end{align} 
\indent
The  Chernoff test admits the following  sequential
description.  Having fixed the control policy up to
time $k$ and obtained the first $k$ observations and
control values $y^k, u^k$, if the controller decides to
continue taking more observations, then at time $k+1,$
a {\em randomized} control policy is adopted wherein
$U_{k+1} \in \mathcal{U}$ is drawn from the following
distribution
\begin{align}
    q \left( u \right) \,=\, q \left( u \Big \vert \hat{i}_k \right)
    \,=\,  \mathop{\mbox{argmax}}_{\overline{q} \left(u \right)}
    \min_{j \in \mathcal{M} \backslash \left \{ \hat{i}_k \right \} }
    \sum_u \overline{q} \left( u \right)
    D \left( p^u_{\hat{i}_k}  \| p^u_j  \right),
    \label{eqn-SS-opt-random-control}
\end{align}
where $\hat{i}_k \,=\, \mathop{\mbox{argmax}}_{i \in
\mathcal{M}}~ p_i \left ( y^k, u^k \right )$, is the ML
estimate of the hypothesis at time $k$. The stopping 
rule is defined as the first time $n$ for which
\begin{align}
    \log{ \left(
    \frac{
        p_{\hat{i}_n} \left( y^n, u^n \right)
    }
    {
        \max\limits_{j \,\neq\, \hat{i}_n} ~p_{j} \left( y^n, u^n \right)
    } \right)
    }
    ~\geq~
    - \log(c),
    \label{eqn-SS-stopping-rule}
\end{align}
where $c$ is a positive real-valued parameter that
will be selected to approach zero in order to drive the
probabilities of error to zero.
At the stopping time $n,$ the decision rule is ML,
i.e., $\delta \left( y^n, u^n \right) \,=\, \hat{i}_n.$
Note that randomization is used in the causal
control policy. This facilitates the
simultaneous minimization of the expected stopping time under the
$M$ hypotheses as the error probability goes to zero.
Also similar to the test proposed in the 
fixed sample size
setting, this sequential test relies on the separation
principle between estimation and control, with the
distinction that the stationary mapping from the
posterior distribution of the hypothesis to the control
value is now randomized.

To dispense with (\ref{eqn-SS-positivity-cond}), we 
propose a ``modified Chernoff test'' with a control 
policy that is slightly different from (\ref{eqn-SS-opt-random-control}).
Specifically, instead of using the policy (\ref{eqn-SS-opt-random-control})
at all times, we will occasionally sample from the uniform control 
independently of the index of the ML hypothesis; the specific way 
in which this is done will be explained in Appendix \ref{app-B-2}.
The stopping rule of this modify test 
will still be as in (\ref{eqn-SS-stopping-rule}) with 
the same $c$ therein.

\subsection{Asymptotic Optimality}
\label{subsection-SS-optimality}

In order to present a formal statement
establishing the strong asymptotic
optimality of the Chernoff test, we introduce the
concept of decision risks or frequentist error
probabilities \cite{drag-et-al-ieeetit-1999}.  In
particular, let $\pi ( i ),~i \in \mathcal{M},$ be a
prior distribution of the hypothesis with a full
support.  For each $i \in \mathcal{M},$ the {\em
probability of incorrectly deciding} $i$ or {\em the
risk of deciding} $i$ is given by
\begin{align}
    R_i ~\triangleq
    \sum_{j \in \mathcal{M} \backslash \left \{ i \right \}}
    \pi ( j ) ~\mathbb{P}_j \left \{ \delta = i \right \}.
    \label{eqn-SS-def-risks}
\end{align}
Note that for each $i \in \mathcal{M},$
\begin{align}
R_i = \sum_{j \in \mathcal{M} \backslash \left \{ i
\right \} }
    \pi ( j ) ~\mathbb{P}_j \left \{ \delta = i \right \}
~\leq~ \max\limits_{ k \in \mathcal{M} } ~\mathbb{P}_k
\left \{ \delta\ne k \right \},
\label{eqn-SS-risk-bounded-by-max-error}
\end{align}
Therefore, the condition $\max\limits_{ k \in
\mathcal{M} }~\mathbb{P}_k \left \{ \delta\ne k \right
\} \rightarrow 0$ implies that $\max\limits_{k \in
\mathcal{M}} R_k \to 0$. 

\vspace{0.03in}
\begin{theorem}
\label{Thm6}  The modified Chernoff test (as
$c \rightarrow 0$) satisfies
\begin{align}
\lim_{c \rightarrow 0}~ \max\limits_{i \in
\mathcal{M}}\, \mathbb{P}_i \left \{ \delta(Y^N,U^N)\ne
i \right \} ~=~ 0, \label{eqn-SS-error-prob-CS}
\end{align}
and for each $i \in \mathcal{M},$
\begin{align}
\hspace{-0.15in}
\mathbb{E}_i[N] &\leq \frac{ -\log{\left(
\max\limits_{k \in \mathcal{M}}  ~\mathbb{P}_k \left \{
\delta\ne k \right \} \right)} }
{  \max\limits_{q (u)} \,\min\limits_{ j \in{\cal M} \backslash \left \{ i \right \} } \,\sum\limits_{u } ~q(u)D(p^u_i \| p^u_j) }  \Big ( 1 + o (1) \Big ) 
\label{eqn-SS-asymp-opt-stoppingtime-CS}\\
&
\leq 
\frac{ -\log\left(  R_i \right)} 
{ \max\limits_{q (u)} \,\min\limits_{ j \in{\cal M}
\backslash \left \{ i \right \} } \,\sum\limits_{u }
~q(u)D(p^u_i \| p^u_j) } \Big ( 1 + o (1) \Big ).
\label{eqn-SS-asymp-opt-stoppingtime-CS-3}
\end{align}
\noindent
Furthermore, the modified Chernoff test is asymptotically
optimal in the following {\em strong sense}. If the
prior $\pi$ has full support on $\mathcal{M}$, then any
sequence of tests with vanishing maximal risk, i.e,
$\max\limits_{k \in \mathcal{M}} R_k \to 0$,
satisfies
for {\em
every} $i \in \mathcal{M}$,
\begin{equation}
\mathbb{E}_i[N]  ~\geq~ \frac{ -\log{\left( R_i
\right)} } { \max\limits_{q (u)} \,\min\limits_{ j
\in{\cal M} \backslash \left \{ i \right \} }
\,\sum\limits_{u } ~q(u)D(p^u_i \| p^u_j) }  \Big ( 1 +
o (1) \Big ).
\label{eqn-SS-asymp-opt-stoppingtime-CS-risk}
\end{equation}
\end{theorem}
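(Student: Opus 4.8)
\emph{Proof strategy.} The plan is to establish the three assertions — vanishing error probability, the upper bounds on $\mathbb{E}_i[N]$, and the matching lower bound — separately. Write $D_i^{\star}\triangleq\max_{q(u)}\min_{j\in\mathcal{M}\setminus\{i\}}\sum_u q(u)\,D(p_i^u\|p_j^u)$ for the denominator appearing in the theorem, and assume $D_i^{\star}>0$ (otherwise $i$ and some $j$ coincide under every control, both displayed right-hand sides are $+\infty$, and there is nothing to prove). The vanishing-error statement (\ref{eqn-SS-error-prob-CS}) is the usual change-of-measure bound for the stopping rule (\ref{eqn-SS-stopping-rule}): on $\{\delta=j\}$ that rule forces $p_j(Y^N,U^N)\ge c^{-1}p_i(Y^N,U^N)$, and since the control factors cancel in the likelihood ratio on $\mathcal{F}_N$ (the policy does not depend on the hypothesis), $\mathbb{P}_i\{\delta=j\}=\mathbb{E}_j\!\left[\frac{p_i(Y^N,U^N)}{p_j(Y^N,U^N)}\mathbf{1}\{\delta=j\}\right]\le c$ for each $j\ne i$, so $\max_i\mathbb{P}_i\{\delta\ne i\}\le(M-1)c\to0$. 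For the upper bounds (\ref{eqn-SS-asymp-opt-stoppingtime-CS})--(\ref{eqn-SS-asymp-opt-stoppingtime-CS-3}) it suffices to prove the single estimate $\mathbb{E}_i[N]\le\frac{-\log c}{D_i^{\star}}(1+o(1))$ as $c\to0$: together with $\max_k\mathbb{P}_k\{\delta\ne k\}\le(M-1)c$ this gives (\ref{eqn-SS-asymp-opt-stoppingtime-CS}) (the resulting additive constant $\log(M-1)$ is absorbed into $(1+o(1))$ since $-\log(\max_k\mathbb{P}_k\{\delta\ne k\})\to\infty$), and (\ref{eqn-SS-asymp-opt-stoppingtime-CS-3}) then follows from (\ref{eqn-SS-risk-bounded-by-max-error}), which gives $-\log R_i\ge-\log(\max_k\mathbb{P}_k\{\delta\ne k\})$.

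The core estimate I would prove by a first-passage/renewal argument. Under $\mathbb{P}_i$ one first shows that the ML index $\hat{i}_k$ is eventually equal to $i$ almost surely and that, for every $j\ne i$, $\liminf_n\frac1n\log\frac{p_i(Y^n,U^n)}{p_j(Y^n,U^n)}\ge D_i^{\star}$ almost surely: once $\hat{i}_k=i$ the control equals $q(\cdot\mid i)$ of (\ref{eqn-SS-opt-random-control}) (apart from the sparse uniform-sampling times of the modified test), so each log-likelihood-ratio process $n\mapsto\log\frac{p_i(Y^n,U^n)}{p_j(Y^n,U^n)}$ has per-step conditional drift $\sum_u q(u\mid i)\,D(p_i^u\|p_j^u)\ge\min_{j'\ne i}\sum_u q(u\mid i)\,D(p_i^u\|p_{j'}^u)=D_i^{\star}$, and the strong law applies. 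Here the \emph{modified} test is what lets us drop (\ref{eqn-SS-positivity-cond}): the occasional uniform sampling prescribed in Appendix \ref{app-B-2} makes every control be exercised along a positive fraction of times, which (via assumption (\ref{eqn-assumption-2}) and $D_i^{\star}>0$) forces each such process to diverge to $+\infty$ and supplies a summable tail for $\mathbb{P}_i\{\hat{i}_n\ne i\}$, while the extra samples cost only $o(-\log c)$. By the stopping rule (\ref{eqn-SS-stopping-rule}), $N$ is at most the first time at which $\hat{i}_n=i$ and $\log\frac{p_i(Y^n,U^n)}{p_j(Y^n,U^n)}\ge-\log c$ for all $j\ne i$; the a.s.\ rate just established gives $N\le\frac{-\log c}{D_i^{\star}}(1+o(1))$ almost surely, and a crude large-deviations bound on $\mathbb{P}_i\{N>m\}$ for $m$ past this threshold supplies the uniform integrability needed to pass to $\mathbb{E}_i[N]$. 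The hardest part will be exactly this step — controlling the learning phase before $\hat{i}_k$ locks onto $i$ when (\ref{eqn-SS-positivity-cond}) may fail, and upgrading the almost-sure first-passage asymptotics to a bound on $\mathbb{E}_i[N]$ — and it is precisely where the construction in Appendix \ref{app-B-2} does its work.

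It then remains to prove the converse (\ref{eqn-SS-asymp-opt-stoppingtime-CS-risk}). Fix $i$ and any sequence of tests with $\max_k R_k\to0$; we may assume $\mathbb{E}_i[N]<\infty$. Summing (\ref{eqn-SS-def-risks}) over $i$ gives $\sum_i\pi(i)\,\mathbb{P}_i\{\delta\ne i\}=\sum_k R_k\to0$, so (using that $\pi$ has full support) $\mathbb{P}_i\{\delta\ne i\}\to0$ for every $i$, and also $\mathbb{P}_j\{\delta=i\}\le R_i/\pi(j)\to0$. For each fixed $j\ne i$, the data-processing inequality for relative entropy applied to the indicator $\mathbf{1}\{\delta=i\}$, combined with Wald's identity for the $\mathcal{F}_n$-martingale $\sum_{k=1}^n\bigl(\log\frac{p_i^{U_k}(Y_k)}{p_j^{U_k}(Y_k)}-D(p_i^{U_k}\|p_j^{U_k})\bigr)$ (valid under (\ref{eqn-assumption-2}) and $\mathbb{E}_i[N]<\infty$, the control terms cancelling), yields
\begin{align}
d\!\left(\mathbb{P}_i\{\delta=i\},\,\mathbb{P}_j\{\delta=i\}\right)\ &\le\ \mathbb{E}_i\!\left[\sum_{k=1}^{N} D(p_i^{U_k}\|p_j^{U_k})\right] \nonumber\\
&=\ \mathbb{E}_i\!\left[N\sum_u \hat{q}(u)\,D(p_i^u\|p_j^u)\right], \nonumber
\end{align}
where $d(\cdot,\cdot)$ is binary relative entropy and $\hat{q}(u)\triangleq\frac1N\sum_{k=1}^N\mathbf{1}\{U_k=u\}$ is the empirical control frequency. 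Since $\mathbb{P}_i\{\delta=i\}\to1$ and $\mathbb{P}_j\{\delta=i\}\le R_i/\pi(j)$, the left-hand side is $-\log R_i\,(1+o(1))$. Now use minimax duality for the bilinear game defining $D_i^{\star}$ to obtain a distribution $\rho^{\star}$ on $\mathcal{M}\setminus\{i\}$ with $\sum_{j\ne i}\rho^{\star}(j)\,D(p_i^u\|p_j^u)\le D_i^{\star}$ for \emph{every} $u\in\mathcal{U}$; averaging the displayed inequality over $j$ with weights $\rho^{\star}(j)$ and applying this bound pointwise in $u$ gives
\begin{align}
-\log R_i\,(1+o(1))\ &\le\ \mathbb{E}_i\!\left[N\sum_u\hat{q}(u)\sum_{j\ne i}\rho^{\star}(j)\,D(p_i^u\|p_j^u)\right] \nonumber\\
&\le\ D_i^{\star}\,\mathbb{E}_i[N], \nonumber
\end{align}
which is (\ref{eqn-SS-asymp-opt-stoppingtime-CS-risk}).
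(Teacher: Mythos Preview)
Your achievability argument (vanishing error via change of measure, then the first-passage estimate $\mathbb{E}_i[N]\le\frac{-\log c}{D_i^\star}(1+o(1))$ by controlling the learning phase and then invoking the law of large numbers for the log-likelihood ratio) is essentially the paper's own route in Appendix~\ref{app-B-2}: the paper also decomposes the LLR into a martingale part and a drift-deviation part (its display (\ref{eqn-Pf-Thm3-llr-decomp1})), bounds the time $T$ until $\hat{i}_k$ locks onto $i$, and then follows Chernoff's Lemma~2. One small inaccuracy: the uniform sampling in Appendix~\ref{app-B-2} occurs only at times $\lceil a^l\rceil$, $l=0,1,\ldots$, i.e.\ about $\log_a n$ times in $n$ steps, not along a ``positive fraction'' of times as you write; this is exactly what makes the extra samples cost $o(-\log c)$ while still giving a polynomial (hence summable for $a$ close enough to $1$) tail $\mathbb{P}_i\{T>k\}=O(k^{-\gamma})$. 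A genuinely positive fraction of uniform samples would give exponential decay of that tail but would corrupt the asymptotic drift and destroy the constant $D_i^\star$.

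Your converse, however, is a genuinely different and cleaner argument than the paper's. The paper proves (\ref{eqn-SS-asymp-opt-stoppingtime-CS-risk}) probabilistically: it first shows by a change-of-measure on the event $\{\delta=i\}$ that $\mathbb{P}_i\{Z_{ij}(N)\ge-\rho\log R_i\}\to1$ (Lemma~\ref{lem:LLR_large_whp}), then combines this with a martingale-stability bound $\mathbb{P}_i\{\max_{m\le n}\min_{j\ne i}Z_{ij}(m)\ge n(d_i^*+1-\rho)\}\to0$ to conclude $\mathbb{P}_i\{N>\frac{-\rho\log R_i}{d_i^*}\}\to1$, and finally invokes Markov's inequality. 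You instead go directly through information: data processing on the indicator $\mathbf{1}\{\delta=i\}$ plus Wald/optional stopping gives $d(\mathbb{P}_i\{\delta=i\},\mathbb{P}_j\{\delta=i\})\le\mathbb{E}_i\big[\sum_{k\le N}D(p_i^{U_k}\|p_j^{U_k})\big]$, and then the von Neumann minimax theorem supplies a worst-case prior $\rho^\star$ on the alternatives that collapses the right-hand side to $D_i^\star\,\mathbb{E}_i[N]$ uniformly over the (random) empirical control frequency. This is correct (the game is bilinear on compact simplices, so minimax duality applies; the optional-stopping step is justified by assumption~(\ref{eqn-assumption-2}) and $\mathbb{E}_i[N]<\infty$), and it buys you a shorter, more transparent proof that bypasses both the probabilistic Lemma~\ref{lem:stop-time-is-large_whp} and the martingale growth estimate (\ref{eq:lemma5_chernoff}). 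The paper's route, on the other hand, yields the slightly finer in-probability statement $\mathbb{P}_i\{N>\frac{-\rho\log R_i}{d_i^*}\}\to1$ as a by-product, which your expectation-level argument does not directly give.
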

\indent
\begin{remark}
The converse assertion
(\ref{eqn-SS-asymp-opt-stoppingtime-CS-risk}) in terms
of maximal risk implies the one in terms of the maximal
error probability, but not vice versa. Thus the
asymptotic optimality of the modified Chernoff test established
in Theorem~\ref{Thm6} is stronger than the
corresponding result in \cite{cher-amstat-1959,
bess-tech-repo-1960}, which is given in terms of
maximal error probability.
\end{remark}

\subsection{Asymptotically Optimal Test Meeting Hard Constraints on the Risks}
\label{subsection-SS-hard-risk-contraints} 

Although the calculation of risks involves
the prior distribution of the hypothesis, the test
proposed in Section \ref{subsection-SS-test} does not
use the knowledge of the prior distribution at all.  In
this section, we show that by using this knowledge, we
can further modify our test to meet hard constraints on the
risks. Another key to this new test is the use of
different thresholds for the peak of the posterior
distribution depending on the index of the
ML hypothesis
instead of a single threshold as in
(\ref{eqn-SS-stopping-rule}).  In the asymptotic regime
in which all the risks vanish, we show that this
modified test will
also be asymptotically optimal.
\newline \indent
Specifically, for a given tuple $\left(
\bar{R}_1, \ldots, \bar{R}_M \right),$ we will design a
test to satisfy $R_i \leq \bar{R}_i,\ i \in
\mathcal{M}$. To this end,
we modify the stopping rule
(\ref{eqn-SS-stopping-rule}) to be so that we stop at
the first time $n$ when 
\begin{align}
    \log{ \left(
    \frac{
       \pi \left( \hat{i}_n \right) ~p_{\hat{i}_n} \left( y^n, u^n \right)
    }
    {
        \max\limits_{j \,\neq\, \hat{i}_n} ~\pi \left( j \right) ~p_{j} \left( y^n, u^n \right)
    } \right)
    }
    ~\geq~
    \log{\left ( \frac{(M-1)\pi \left( \hat{i}_n \right)}{\bar{R}_{\hat{i}_n}} \right)}.
    \label{eqn-SS-stopping-rule-hard-risk-constraints}
\end{align}
\begin{theorem}
\label{Thm7}  For any tuple $\left( \bar{R}_1, \ldots,
\bar{R}_M \right), \bar{R}_i > 0,\ i \in \mathcal{M}$
and any $\pi$ with a full support, the modified
Chernoff test but with the stopping rule (\ref{eqn-SS-stopping-rule-hard-risk-constraints})
in place of (\ref{eqn-SS-stopping-rule})
satisfies, for every $i \in \mathcal{M}$,
\begin{equation}
    \sum_{j \neq i} \pi ( j )~
        \mathbb{P}_j \left \{ \delta \left( Y^N, U^N \right) = i \right \}
    ~\leq~ \bar{R}_i.
\label{eqn-Thm7-hard-risk-constraints}
\end{equation}
Furthermore,
as $\max\limits_{i \in \mathcal{M}} \bar{R}_i  
\hspace{-0.05in}
\rightarrow 
\hspace{-0.05in}
0,$ 
while satisfying $\max\limits_{i \in \mathcal{M}} \bar{R}_i \leq 
K \left( \min\limits_{i \in \mathcal{M}} \bar{R}_i \right)$ for some $K > 0$,
the proposed test
is asymptotically optimal, i.e., it
satisfies (\ref{eqn-SS-asymp-opt-stoppingtime-CS}) and, hence, 
also (\ref{eqn-SS-asymp-opt-stoppingtime-CS-3}).
\end{theorem}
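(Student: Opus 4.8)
Write $\bar R_{\max}\triangleq\max_i\bar R_i$ and $\bar R_{\min}\triangleq\min_i\bar R_i$. The statement has two parts, and I would treat them separately.

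\emph{The hard constraint (\ref{eqn-Thm7-hard-risk-constraints}).}
The plan is a change-of-measure argument at the stopping time $N$. Rewriting the stopping condition (\ref{eqn-SS-stopping-rule-hard-risk-constraints}), at time $N$ one has $\pi(j)\,p_j(Y^N,U^N)\le \tfrac{\bar R_{\hat i_N}}{M-1}\,p_{\hat i_N}(Y^N,U^N)$ for every $j\neq\hat i_N$; on the event $\{\delta=i\}=\{\hat i_N=i\}$ this reads $\pi(j)\,p_j(Y^N,U^N)\le \tfrac{\bar R_i}{M-1}\,p_i(Y^N,U^N)$ for all $j\neq i$. Since the control kernels $q(u_1),\{q(u_k\,|\,y^{k-1},u^{k-1})\}$ in (\ref{eqn:FSS-CSdistribution-hypo-i}) do not depend on the true hypothesis, they cancel in the ratio $p_j/p_i$, so $p_j(Y^N,U^N)/p_i(Y^N,U^N)$ is the Radon--Nikodym derivative of $\mathbb P_j$ with respect to $\mathbb P_i$ on $\mathcal F_N$ (with $N<\infty$ a.s.\ inherited from the modified-test construction, exactly as in the proof of Theorem~\ref{Thm6}, the occasional uniform sampling forcing the relevant log-likelihood ratios to $+\infty$). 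Hence $\pi(j)\,\mathbb P_j\{\delta=i\}=\mathbb E_i\!\big[\mathbf 1\{\delta=i\}\,\tfrac{\pi(j)p_j(Y^N,U^N)}{p_i(Y^N,U^N)}\big]\le \tfrac{\bar R_i}{M-1}$, and summing over the $M-1$ indices $j\neq i$ gives $R_i\le\bar R_i$, which is (\ref{eqn-Thm7-hard-risk-constraints}). The only care needed is the absolute-continuity bookkeeping for the stopped $\sigma$-field, which follows from the consequences of assumption (\ref{eqn-assumption-2}) and from $N<\infty$ a.s.

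\emph{Asymptotic optimality.}
Here I would piggyback on the analysis already carried out for the modified Chernoff test with the \emph{constant} threshold $-\log c$ in (\ref{eqn-SS-stopping-rule}) (Theorem~\ref{Thm6}, Appendix~\ref{app-B-2}). The key observation is that, conditioned on the current ML estimate being $\hat i_n$, the threshold in (\ref{eqn-SS-stopping-rule-hard-risk-constraints}) equals $\log\tfrac{(M-1)\pi(\hat i_n)}{\bar R_{\hat i_n}}$, which lies between $\log\tfrac{(M-1)\min_k\pi(k)}{\bar R_{\max}}$ and $\log\tfrac{(M-1)\max_k\pi(k)}{\bar R_{\min}}$; under the hypothesis $\bar R_{\max}\le K\bar R_{\min}$ both bounds equal $(-\log\bar R_{\max})(1+o(1))$ as $\bar R_{\max}\to 0$. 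Under $\mathbb P_i$ the ML estimate settles on $i$ after a number of steps with bounded expectation, and from then on the randomized control (\ref{eqn-SS-opt-random-control}) together with the vanishing-density uniform exploration makes $\min_{j\neq i}\log\tfrac{p_i(Y^n,U^n)}{p_j(Y^n,U^n)}$ grow with per-step drift $\max_{q(u)}\min_{j\neq i}\sum_u q(u)D(p_i^u\|p_j^u)(1+o(1))$ --- precisely the estimate underlying Theorem~\ref{Thm6}. A Wald-identity / optional-stopping argument then yields $\mathbb E_i[N]\le \tfrac{-\log\bar R_i}{\max_{q(u)}\min_{j\neq i}\sum_u q(u)D(p_i^u\|p_j^u)}(1+o(1))$.

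\emph{Conversion and conclusion.}
Finally I would replace $-\log\bar R_i$ by $-\log(\max_k\mathbb P_k\{\delta\neq k\})$. From Part~1 and the elementary identity $\sum_i R_i=\sum_j\pi(j)\,\mathbb P_j\{\delta\neq j\}$ one gets $\min_k\pi(k)\cdot\max_k\mathbb P_k\{\delta\neq k\}\le \sum_i R_i\le M\bar R_{\max}$, so $-\log(\max_k\mathbb P_k\{\delta\neq k\})\ge -\log\bar R_{\max}-O(1)$; combining with $-\log\bar R_i\le-\log\bar R_{\min}\le-\log\bar R_{\max}+\log K$ and $-\log\bar R_{\max}\to\infty$, we obtain $-\log\bar R_i\le -\log(\max_k\mathbb P_k\{\delta\neq k\})(1+o(1))$. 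Substituting into the bound of the previous paragraph gives (\ref{eqn-SS-asymp-opt-stoppingtime-CS}), and since $R_i\le\max_k\mathbb P_k\{\delta\neq k\}$ by (\ref{eqn-SS-risk-bounded-by-max-error}), (\ref{eqn-SS-asymp-opt-stoppingtime-CS}) in turn implies (\ref{eqn-SS-asymp-opt-stoppingtime-CS-3}). I expect the main obstacle to be the drift estimate in the second paragraph --- showing that the state-dependent threshold and the occasional uniform sampling perturb the per-step growth of the log-likelihood ratio only by $o(1)$ --- but this is essentially inherited from the proof of Theorem~\ref{Thm6}, so the genuinely new content is the comparability bookkeeping that turns $\bar R_{\max}\le K\bar R_{\min}$ into the stated first-order asymptotics.
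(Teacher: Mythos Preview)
Your proposal is correct, and Part~1 matches the paper almost exactly: both use the change-of-measure inequality that on $\{\delta=i\}$ the stopping criterion forces $\pi(j)p_j(Y^N,U^N)\le \tfrac{\bar R_i}{M-1}\,p_i(Y^N,U^N)$, then integrate and sum over $j\neq i$. (The paper swaps the roles of the indices, writing $i$ for the true hypothesis and $j$ for the decision, and works on the events $\{N=n,\delta=j\}$, but the content is identical.)

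For Part~2 you and the paper diverge in presentation. You argue that the state-dependent threshold in (\ref{eqn-SS-stopping-rule-hard-risk-constraints}) lies between $\log\!\big(\tfrac{(M-1)\min_k\pi(k)}{\bar R_{\max}}\big)$ and $\log\!\big(\tfrac{(M-1)\max_k\pi(k)}{\bar R_{\min}}\big)$, and that under the comparability hypothesis both are $(-\log\bar R_{\max})(1+o(1))$, so the drift/Wald analysis from Appendix~\ref{app-B-2} goes through essentially unchanged. The paper instead constructs an auxiliary test with the \emph{single} threshold equal to your upper bound, observes that its stopping time $N'$ dominates $N$ a.s.\ (since a larger threshold can only delay stopping), applies Theorem~\ref{Thm6} verbatim to the auxiliary test to bound $\mathbb E_i[N']$, and then transfers the bound to $\mathbb E_i[N]\le\mathbb E_i[N']$. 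This sidesteps entirely the issue you flag as your ``main obstacle'' --- there is no need to revisit the drift estimate with a varying threshold, because the comparison test has a constant one. Your conversion of $-\log\bar R_i$ to $-\log\!\big(\max_k\mathbb P_k\{\delta\neq k\}\big)$ via $\sum_i R_i=\sum_j\pi(j)\mathbb P_j\{\delta\neq j\}$ is correct and matches the paper's use of (\ref{eqn-Pf-Appendix-B3-1}) and the comparability assumption; the paper just packages it slightly differently by setting $c=\tfrac{1}{M-1}\min_{i\neq j}\tfrac{\bar R_i}{\pi(j)}$ and invoking $\max_k\mathbb P_k\{\delta\neq k\}\le K'c$.
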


\section{Example}
\label{section-Example}

We consider an example with parameters $\,M = 3,\
\mathcal{Y} = \left \{ 0, 1 \right \},\ \mathcal{U} =
\left \{ a, b, c \right \}.$
For an arbitrary $\,\epsilon,\ 0 < \epsilon <
\frac{1}{2},\,$ denote by $\,p \left( y \right)\,$ and
$\,\overline{p} \left( y \right),\,$ the two pmfs on
$\,\mathcal{Y}\,$ for which $\,p \left( 1 \right) =
\epsilon\,$ and $\,\overline{p} \left( 1 \right) = 1 -
\epsilon,\,$ respectively. Then, consider the model for
controlled sensing for hypothesis testing in which the
pmfs $~p_i ^u,\ i \in \left \{ 0, 1, 2 \right \},\ u
\in \left \{ a, b, c \right \},~$ are assigned
according to Table 1.
\begin{center}
 \scalebox{0.9}{
\hspace{0.6in} \begin{tabular}{ l | c  | c | c |}
                & $u = a$ & $u=b$ & $u=c$ \\ \hline
    $i = 0$ & $p$  & $\overline{p}$     & $\overline{p}$     \\ \hline
    $i = 1$ & $\overline{p}$       & $p$     & $\overline{p}$     \\ \hline
        $i = 2$ & $\overline{p}$       & $\overline{p}$     & $p$     \\ \hline
\end{tabular}}\newline

\small{Table 1:~Example}
\end{center}
This example is motivated by adaptive sensor selection
for event detection.  Consider a sensor network with a
fusion center and three sensors $a$, $b$ and $c$,
collecting measurements from three separate locations
0, 1, 2. A specific event takes place at exactly one
unknown location; it affects the distribution of the
measurements at this particular location (represented
by the distribution $p$ in Table 1), while the
measurements at the other two locations are distributed
according to $\overline{p}$.  At every time step, the
fusion center can query only one sensor to measure its
readings.  The goal is to determine the location of the
event in the most efficient manner.

The optimal exponent for open-loop control (cf.
(\ref{eqn:FSS-multiplehypo-opt-OLexponents})) can be
easily calculated to be
\begin{equation}
    \betaOL ~=~ \frac{2}{3} C \left( p , \overline{p} \right)
    ~=~ - \frac{2}{3}
    \log{ \left( 2 \sqrt{\epsilon \left( 1 - \epsilon \right)} \right)}.
\label{eqn-example1-OL-exponent}
\end{equation}

For causal control, we apply the control policy
presented in Section \ref{subsubsec-FSS-CS} (cf.
(\ref{eqn-FSS-opt-control})). Then, by solving the
maximization in (\ref{eqn-FSS-opt-control}), we obtain
a {\em deterministic} causal control policy, which is
given by $~u_{k+1} = f \left( \hat{i}_k \right),$ where
$f \left( 0 \right) = a,\
    f \left( 1 \right) = b,\
    f \left( 2 \right) = c$.\,
Lastly, at time $\,n,\,$ the decision is made for the
maximum likelihood estimate, i.e., $~\delta \left( y^n
\right) \,=\, \hat{i}_n$.
We now analyze the maximal error probability of this
test. To this end, for any $y^n,$ we let
\begin{align}
 \hspace{-0.1in}
 k_a
 \,=\,
 \Bigg
 \vert
 \left \{
    k \in \left \{ 1, \ldots, n \right \}
    \,:  \begin{array}{ll}
            	u_k = a ~\mbox{and}~y_k = 1,\ \mbox{or}
            \\
            	u_k \neq a~\mbox{and}~y_k = 0
            \end{array}
 \right \}
 \Bigg
 \vert .
\label{eqn-example1-number-bad-votes}
\end{align}
Then, we get from Table 1 that $~p_0 \left (
y^n \right )
    = \epsilon^{k_a} \left( 1 - \epsilon \right)^{n - k_a}.$
Similarly, we can define $k_b$ and $k_c$ with $a$ in
(\ref{eqn-example1-number-bad-votes}) replaced by $b$
and $c,$ respectively, and get that $p_1 \left ( y^n
\right ) =
    \epsilon^{k_b} \left( 1 - \epsilon \right)^{n - k_b},\
p_2 \left ( y^n \right ) =
    \epsilon^{k_c} \left( 1 - \epsilon \right)^{n - k_c}.
$

We sort $\,\left \{ k_a, k_b, k_c \right \}\,$ in an
ascending order and denote the sorted values by $~k_1
\leq k_2 \leq k_3.~$ Note that at every time step, the
most likely hypothesis is the one associated with
$k_1.$ Then, it follows from  Table 1 that as $n$
increases by one, if $\,y_n = 1,\,$ then \emph{the
least} of $\,\left \{ k_a, k_b, k_c \right \}\,$
increases by one, while the other two remain fixed.  On
the other hand, if $\,y_n = 0,\,$ then \emph{the least}
of $\,\left \{ k_a, k_b, k_c \right \}\,$ remains
fixed, while the other two increase by one. Hence, If
we let $\,k\,$ denote the number of zeros in $y^n$,
then $\frac{k_a + k_b + k_c}{3} ~=~ \frac{n+k}{3}$.
In addition, starting from no observation at time zero
when $\,\left \{ k_a, k_b, k_c \right \}\,$ are all
equal to zero, we get from an induction argument that,
$k_2 \leq k_3 \leq k_2 + 1.$
This argument is similar to that in [pp.
54]\cite{berl-phd-thes-1964}; we refer the reader to
\cite{berl-phd-thes-1964} for further details.
We can now conclude from these previous identities that
\begin{equation}
    k_2 ~\geq~
    \frac{k_a + k_b + k_c}{3} - \frac{1}{3} ~=~ \frac{n+k}{3}
    - \frac{1}{3}.
    \label{eqn-example1-bound-second-largest-vote}
\end{equation}
At time $n, \delta \left( y^n \right)\,$ corresponds to
the smallest $k_1;$ it follows from
(\ref{eqn-example1-bound-second-largest-vote}) that for
any $i = 0, 1, 2,$
\begin{eqnarray}
    \mathbb{P}_i \left \{ \delta \neq i \right \}
    &\leq& \sum_{y^n} \epsilon^{k_2 \left( y^n \right)}
                                \left( 1-\epsilon \right)^{n - k_2 \left( y^n \right)}
                                \nonumber \\
    &=& \sum_{w=1}^n~
    \sum_{y^n:\,\vert \left \{k:\,y_k = 0 \right \} \vert = w}
    \epsilon^{k_2 \left( y^n \right)}
                                \left( 1-\epsilon \right)^{n - k_2 \left( y^n \right)}
                                \nonumber \\
    &\leq&
    \left(
    \sum_{w=0}^n
    {\small \left(
        \begin{array}{cc}
            n \\
            w
        \end{array}
    \right) }
    \epsilon^{\frac{\left( n + w \right)}{3} - \frac{1}{3}}
    \left( 1- \epsilon \right)^{\frac{\left( 2n - w \right)}{3} + \frac{1}{3}}
    \right)
    \nonumber \\
    &=&
    \frac{
    \left(
    \epsilon^{\frac{1}{3}} \left( 1- \epsilon \right)^{\frac{2}{3}}
    +
    \epsilon^{\frac{2}{3}} \left( 1- \epsilon \right)^{\frac{1}{3}}
    \right)^n
    }
    {\epsilon^{\frac{1}{3}} \left( 1- \epsilon
    \right)^{-\frac{1}{3}}},
    \nonumber
\end{eqnarray}
and we get that
\begin{align}
\mathop{\overline{\mbox{lim}}}_{n \rightarrow \infty}\
- \frac{1}{n} \log{ \left( \max_{i = 0, 1, 2}
\mathbb{P}_i \left \{ \delta \neq i \right \} \right) }   \hspace{0.7in}  \nonumber \\
\ \geq\  -\log{ \left( \epsilon^{\frac{1}{3}} \left( 1-
\epsilon \right)^{\frac{2}{3}}
    +
    \epsilon^{\frac{2}{3}} \left( 1- \epsilon \right)^{\frac{1}{3}}
\right)}. \label{eqn-example1-CS-exponent}
\end{align}
Comparing (\ref{eqn-example1-OL-exponent}) to
(\ref{eqn-example1-CS-exponent}), we get that for every
$~\epsilon,\ 0 < \epsilon < \frac{1}{2},~$ causal
control can yield a larger error exponent than the best
open-loop control.
By the symmetry in Table 1, the upper bound for
$\betaC$ in (\ref{eqn:FSS-multiplehypo-opt-exponents})
can be calculated to be $C \left( p , \overline{p}
\right) = - \log{\left( 2 \sqrt{\epsilon \left( 1 -
\epsilon \right)}\right)}.$

Lastly, we show that our lower bound for $\betaC$ in
(\ref{eqn:FSS-multiplehypo-opt-exponents}) gives the
same achievable error exponent in
(\ref{eqn-example1-CS-exponent}) for this example. To
this end, we consider the argument of the $-\log{}$ in
the lower bound
\begin{equation}
    \sup_{\nu}
    \min_{u}
    \max_i
    \left(
    \sum_{y} p_i^u \left( y \right)
    e^{
    \frac{
    \eta \left [
                \nu \circ p^u \left( y \right) - p_i^u \left (y \right)
                \right ]
    }
    {\left( 1 - \nu \left( i \right) \right)}
    }
    \right).
    \label{eqn-Example1-lowerbd-betaC-1}
\end{equation}
Note that the argument minimizer $u$ in
(\ref{eqn-Example1-lowerbd-betaC-1}) is a function of
$\nu$. Hence, if the minimizer is replaced by a
specific function $u = f \left( \nu \right)$, then we
will get a larger quantity, i.e.,
\begin{align}
\sup_{\nu}
    \min_{u}
    \max_i
    \left(
    \sum_{y} p_i^u \left( y \right)
    e^{
    \frac{
    \eta (
                \nu \circ p^u \left( y \right) - p_i^u \left (y \right)
            )
    }
    {\left( 1 - \nu \left( i \right) \right)}
    }
    \right) \hspace{0.60in}  \nonumber \\
    \leq\    \sup_{\nu}
    \max_i
    \left(
    \sum_{y} p_i^{f \left( \nu \right)} \left( y \right)
    e^{
    \frac{
    \eta (
                \nu \circ p^{f \left( \nu \right)} \left( y \right)
            - p_i^{f \left( \nu \right)} \left (y \right)
            )
    }
    {\left( 1 - \nu \left( i \right) \right)}
    }
    \right).
    \hspace{0.1in}
        \label{eqn-Example1-lowerbd-betaC-2}
\end{align}
In particular, consider the following function
\begin{align}
    u ~=~ f \left( \nu \right)
    ~=~ \small \left \{
    \begin{array}{cc}
    a, & \mathop{\mbox{argmax}}_{i} \nu \left(i \right) = 0, \\
    b, & \mathop{\mbox{argmax}}_{i} \nu \left(i \right) = 1,\\
    c, & \mathop{\mbox{argmax}}_{i} \nu \left(i \right) = 2.
    \end{array}
    \right.
\label{eqn-Example1-lowerbd-betaC-3}
\end{align}

For an arbitrary $\nu \left( i \right),\ i = 0, 1, 2,$
denote their respective sorted values by $\nu_{\tiny u}
\geq \nu_{\tiny c} \geq \nu_{\tiny \ell}$.  Then, it
follows from (\ref{eqn-Example1-lowerbd-betaC-3})
via appropriate algebraic manipulations using Table 1
that
\begin{align}
\hspace{-0.2in} \sup_{\nu} \max_i
    \left(
    \sum_{y} p_i^{f \left( \nu \right)} \left( y \right)
    e^{
    \frac{
    \eta (
                \nu \circ p^{f \left( \nu \right)} \left( y \right)
            - p_i^{f \left( \nu \right)} \left (y \right)
            )
    }
    {\left( 1 - \nu \left( i \right) \right)}
    }
    \right)
   \hspace{0.45in} \nonumber \\
=\  \hspace{-0.1in} \sup_{ 1> \nu_{\tiny u} \geq
\nu_{\tiny c} \geq \nu_{\tiny \ell} } \,\hspace{-0.2in}
\max \left ( \hspace{-0.1in}
\begin{array}{cc}
    \left( 1 - \epsilon \right)
    e^{
    - \eta \left( 1 - 2 \epsilon \right)
    }
       \hspace{-0.04in}
    \,+\,
       \hspace{-0.04in}
    \epsilon~
    e^{
    \eta \left( 1 - 2 \epsilon \right)
    }, \\
    \left( 1 - \epsilon \right)
    e^{
    \frac{
    - \eta \left( 1 - 2 \epsilon \right)
    \nu_{\tiny u}
    }
    { \left( 1 - \nu_{\tiny c} \right)}
    }
   \hspace{-0.04in}
    \,+\,
       \hspace{-0.04in}
    \epsilon~
    e^{
    \frac{
    \eta \left( 1 - 2 \epsilon \right)
    \nu_{\tiny u}
    }
    { \left( 1 - \nu_{\tiny c} \right) }
    }, \\
   \left( 1 - \epsilon \right)
    e^{
    \frac{
    - \eta \left( 1 - 2 \epsilon \right)
    \nu_{\tiny u}
    }
    { \left( 1 - \nu_{\tiny \ell} \right) }
    }
       \hspace{-0.04in}
   \,+\,
       \hspace{-0.04in}
    \epsilon~
    e^{
    \frac{
    \eta \left( 1 - 2 \epsilon \right)
    \nu_{\tiny u}
    }
    { \left( 1 - \nu_{\tiny \ell} \right) }
    }
\end{array}
\hspace{-0.1in} \right ).
\label{eqn-Example1-lowerbd-betaC-5}
\end{align}
Next, we select $\eta = \frac{  2 \log{\left(
\frac{\left( 1-\epsilon \right)}{\epsilon} \right)} }
{3 \left( 1 - 2 \epsilon \right) }.
$
Note that for any $\nu_{\tiny u} \geq \nu_{\tiny c}
\geq \nu_{\tiny \ell},$
\begin{equation}
 \frac{1}{3} \ \leq\ \frac{ 2 \nu_{\tiny u}}{3 \left( 1 - \nu_{\tiny \ell}
 \right)} \ \leq\
  \frac{ 2 \nu_{\tiny u}}{3 \left( 1 - \nu_{\tiny c} \right)}
 \ \leq\ \frac{2}{3}.
\label{eqn-Example1-lowerbd-betaC-7}
\end{equation}
It then follows from the selection of $\eta,$
(\ref{eqn-Example1-lowerbd-betaC-7}), and the fact that
for any $0 < \epsilon < \frac{1}{2}$,
\begin{align}
\hspace{-0.02in} 
\max_{\frac{1}{3} \leq s \leq \frac{2}{3}} 
\hspace{-0.04in} 
\left( 1 - \epsilon \right)^{1-s}
\epsilon^s 
\hspace{-0.03in} 
+ 
\hspace{-0.03in} 
\left( 1 - \epsilon \right)^{s}
\epsilon^{1-s} 
\hspace{-0.025in} 
= 
\hspace{-0.025in} 
\left( 1 - \epsilon
\right)^{\frac{2}{3}} \epsilon^{\frac{1}{3}} 
\hspace{-0.03in} 
+ 
\hspace{-0.03in} 
\left( 1 - \epsilon \right)^{\frac{1}{3}}
\epsilon^{\frac{2}{3}}, \nonumber
\end{align}
that for any $\nu_{\tiny u} \geq \nu_{\tiny c} \geq
\nu_{\tiny \ell},$

\begin{align}
\max \left (
\begin{array}{cc}
    \left( 1 - \epsilon \right)
    e^{
    - \gamma
    }
    + \epsilon~
    e^{
        \gamma
    },\\
    \left( 1 - \epsilon \right)
    e^{
    - \gamma_{\tiny c}
    } +
    \epsilon~
    e^{
    \gamma_{\tiny c}
    },\\
   \left( 1 - \epsilon \right)
    e^{
    - \gamma_{\tiny l}
    } +
    \epsilon~
    e^{
    \gamma_{\tiny l}
    }
\end{array}
\right )
=  \left( 1 - \epsilon \right)^{\frac{2}{3}}
\epsilon^{\frac{1}{3}} + \left( 1 - \epsilon
\right)^{\frac{1}{3}} \epsilon^{\frac{2}{3}},
\nonumber
\end{align}
where $\gamma = \frac{  2 \log{\left( \frac{\left(
1-\epsilon \right)}{\epsilon} \right)} } {3},\
{\gamma}_{\tiny c} = \frac{  2 \log{\left( \frac{\left(
1-\epsilon \right)}{\epsilon} \right)} \nu_{\tiny u}}
{3 \left( 1 - \nu_{\tiny c} \right)},\ {\gamma}_{\tiny
l} = \frac{  2 \log{\left( \frac{\left( 1-\epsilon
\right)}{\epsilon} \right)} \nu_{\tiny u}} {3 \left( 1
- \nu_{\tiny l} \right)}.$ Following from
(\ref{eqn-Example1-lowerbd-betaC-2}) and
(\ref{eqn-Example1-lowerbd-betaC-5}) by 

\noindent
taking $-
\log{},$ we get that
\begin{align}
\betaC  &\ge - \log{ \left( \sup_{\nu} \max_i
    \left (
    \sum_{y} p_i^{f \left( \nu \right)} \left( y \right)
    e^{
    \frac{
    \eta (
                \nu \circ p^{f \left( \nu \right)} \left( y \right)
            - p_i^{f \left( \nu \right)} \left (y \right)
                )
    }
    {\left( 1 - \nu \left( i \right) \right)}
    }
    \right )
    \right)
}
\nonumber \\
&= -\log{ \left( \left( 1 - \epsilon
\right)^{\frac{2}{3}} \epsilon^{\frac{1}{3}} + \left( 1
- \epsilon \right)^{\frac{1}{3}} \epsilon^{\frac{2}{3}}
\right) } \nonumber
\end{align}
as required. This lower bound matches the one
in (\ref{eqn-example1-CS-exponent}). 

In the sequential setting, the quantities
dictating the asymptotically optimal performance are
$\max\limits_{q (u)} \, \min\limits_{ j \in{\cal M}
\backslash \left \{ i \right \} }\, \sum\limits_{u }
~q(u)D \left( p^u_i \| p^u_j \right),$ the denominators
on the right-side of
(\ref{eqn-SS-asymp-opt-stoppingtime-CS}), which can
readily be computed for this example to be
$-\log{\left( 2 \sqrt{\epsilon \left(1-\epsilon
\right)} \right)}$ for every $i \in \mathcal{M}$.  The
numerical value of this quantity is, as expected,
larger than $\betaC$ in the fixed sample size setting,
as now the control has an additional capability to
adaptively stop taking observations based on past
observations. 

\section{Discussion}
\label{section-Discussion}

In the proposed sequential test in Section
\ref{section-SS}, information from the past is used to
form the maximum likelihood estimate of the hypothesis,
which is used in turn to select the maximizing
distribution and the maximizing control value in
(\ref{eqn-SS-opt-random-control}). In contrast to
binary hypothesis testing with a fixed sample size (cf.
Proposition \ref{prop-1}), information from the past
seems to be crucial for attaining the asymptotically
optimal performance in the sequential setting, since
the mentioned maximizers can depend on the identity of
the ML hypothesis even for the case of binary
hypothesis testing.

\section{Conclusions}
\label{section-Conclusion} We studied the structure of
the optimal controller for multihypothesis testing with
observation control under various asymptotic regimes.
First, in a setting with a fixed sample size, 
the optimal error exponent corresponds to the maximum
Chernoff information over the choice of controls for
binary hypothesis testing.  In particular, in this
setup, a pure stationary open-loop control policy is
asymptotically optimal even among the broader class of
causal control policies.  For multiple hypothesis
testing, we characterized the optimal error exponent
achievable by open-loop control and derived precise
lower and upper bounds for the optimal error exponent
achievable by causal control.  We also proposed a
causal control policy for multihypothesis testing based
on maximizing the minimum Chernoff information of the
distributions corresponding to the most likely
hypothesis and all the alternative hypotheses. We
illustrated through an example that the proposed causal
control policy strictly outperforms the best open-loop
control policy.

Second, we considered a sequential setting wherein the
objective is to minimize the expected stopping time
subject to the constraints of vanishing error
probabilities under each hypothesis.  
We proposed 
a suitably modified version of the Chernoff test for multiple hypotheses testing and showed that it is asymptotically optimal 
in a strong sense, using the notion of decision making risk 
instead of the overall probability of error.  Our
control policy is based on maximizing the
KL distance
of the distributions corresponding to the most likely
hypothesis and the nearest alternative hypothesis.
We also designed another sequential test to
meet hard constraints on the risks while retaining the
asymptotic optimality.

For binary
hypothesis testing, the findings showed
that past information is crucial in achieving the
asymptotically optimal performance in the sequential
setting, while it is superfluous in the fixed sample
size setting. 
Our results also showed that
for general multiple hypothesis testing, randomization
in control is always superfluous (for any number of
hypotheses) in achieving the asymptotically optimal
performance in the fixed sample size setting. 
On the other hand, we showed that in the sequential setting, 
randomization can facilitate the structure of the asymptotically 
optimal control policy following the separation principle
between estimation and control especially in the
sequential setting.

\appendix
\subsection{Proof of Results in Section \ref{section-FSS}}
\label{app-A}

\subsubsection{Proof of Theorem \ref{Thm2}}
\label{app-A-1}

We start with the achievability proof.  First, note
that for any $n,$ and any test
\begin{align}
   \frac{1}{M} \sum_{i \in \mathcal{M}}  \mathbb{P}_i \left \{ \delta  \neq i \right \}
    &\ \leq\  \max_{i \in \mathcal{M}}~
    \mathbb{P}_i \left \{ \delta  \neq i \right \}
    \nonumber \\
    &\ \leq\ M ~\left(
    \frac{1}{M} \sum_{i \in \mathcal{M}}
    \mathbb{P}_i \left \{ \delta  \neq i \right \}
    \right).
    \label{eqn-Pf-Thm2-max-avg-error-relation}
\end{align}
Fix a sequence $u^n \in \mathcal{U}^n,$ and let
$\delta_{\text{ML}}: \mathcal{Y}^n \rightarrow
\mathcal{M}$ be the ML decision rule.
It now follows that
\begin{align}
\frac{1}{M} \sum_{i \in \mathcal{M}}
\mathbb{P}_i \left \{ \delta_{\text{ML}} \left(Y^n
\right) \neq i \right \} 	\hspace{1.5in} \nonumber \\
=\   \frac{1}{M} \sum_{i \in \mathcal{M}} ~\sum_{j
\in \mathcal{M} \backslash \left \{ i \right \}}
\mathbb{P}_i \left \{ \delta_{\text{ML}} \left(Y^n
\right) = j \right \}.
\label{eqn-Pf-Thm2-pairwise-perror}
\end{align}
For any $i, j,\ 0 \leq i < j \leq M-1,$ and any $s
\in [0, 1],$ we get that
\begin{align}
\mathbb{P}_i \left \{ \delta_{\text{ML}} \left(Y^n
\right) = j \right \} &\leq   \mathbb{P}_i \left \{
p_i \left ( y^n \right )^s p_j \left ( y^n \right
)^{1-s} \geq p_i \left (
y^n \right ) \right \},
\label{eqn-Pf-Thm2-pairwise-perror-1} 
\end{align}
and
\begin{align}
\hspace{-0.0in}
\mathbb{P}_j \left \{ \delta_{\text{ML}} \left(Y^n
\right) = i \right \} &\leq   \mathbb{P}_j \left \{
p_i \left ( y^n \right )^s p_j \left ( y^n \right
)^{1-s} \geq p_j \left ( y^n \right ) \right \}.
\label{eqn-Pf-Thm2-pairwise-perror-2}
\end{align}
Combining (\ref{eqn-Pf-Thm2-pairwise-perror-1}) and
(\ref{eqn-Pf-Thm2-pairwise-perror-2}), we obtain that
\begin{align}
\mathbb{P}_i \left \{ \delta_{\text{ML}}
\left(Y^n \right) = j \right \} + \mathbb{P}_j \left \{
\delta_{\text{ML}} \left(Y^n \right) = i \right \}
\hspace{1.0in}
\nonumber
\end{align}
\begin{align}
&\leq\  
\int_{y^n} \prod_{k=1}^n \left(
p_i^{u_k} \left( y_k \right)^s p_j^{u_k} \left( y_k
\right)^{1-s} \right) \prod_{k=1}^n d \mu_{u_k} \left(
y_k \right)
\nonumber \\
&=\  \prod_{k=1}^n \left( \int_{y_k} p_i^{u_k} \left(
y_k \right)^s p_j^{u_k} \left( y_k \right)^{1-s} d
\mu_{u_k} \left( y_k \right) \right)
\nonumber \\
&=\ e^{n \left( \sum\limits_{u \in \mathcal{U}}
\overline{q}(u) \log{\left( \int_{y} p_i^{u} \left( y
\right)^s p_j^{u} \left( y \right)^{1-s} d \mu_{u}
\left( y \right) \right)} \right) },
\label{eqn-Pf-Thm2-lowerbd-OLexponent}
\end{align}
where $\overline{q} \left( \cdot \right)$ denotes the
empirical distribution of $u^n:~ \overline{q} \left( u
\right) \ \triangleq\ \frac{1}{n} \vert
 \left \{k:\
    k \in \left \{ 1, \ldots, n \right \},\ u_k = u
 \right \}
\vert.$ Since (\ref{eqn-Pf-Thm2-lowerbd-OLexponent}) is
true for any $s \in [0, 1],$ we get that
\begin{align}
\mathbb{P}_i \left \{ \delta_{\text{ML}} \left(Y^n
\right) = j \right \} + \mathbb{P}_j \left \{
\delta_{\text{ML}} \left(Y^n \right) = i \right \}
\hspace{1.0in} \nonumber \\
\leq\  e^{-n \left( \max\limits_{s \in [0,1]}
~-\sum\limits_{u \in \mathcal{U}} \overline{q}(u)
\log{\left( \int_{y} p_i^{u} \left( y \right)^s p_j^{u}
\left( y \right)^{1-s} d \mu_u \left( y \right)
\right)} \right) }. \nonumber
\end{align}
Because there are only finitely many pairs of
hypotheses in the sum on the right-side of
(\ref{eqn-Pf-Thm2-pairwise-perror}), the pair
corresponding to the smallest exponent will dominate
the exponent.  Hence, we get
\begin{align}
\frac{1}{M} \sum_{i \in \mathcal{M}}
\mathbb{P}_i \left \{ \delta_{\text{ML}} \left(Y^n
\right) \neq i \right \} \hspace{2.04in} \nonumber \\
\leq  
\hspace{-0.02in}
\left( M-1 \right) e^{-n \left( \min\limits_{i
< j} \max\limits_{s \in [0,1]} -\sum\limits_{u}
\overline{q}(u) \log{\left( \int_{y} p_i^{u} \left( y
\right)^s p_j^{u} \left( y \right)^{1-s} d \mu_u \left(
y \right) \right)} \right) }.
\nonumber
\end{align}

Since $u^n$ is arbitrary, we can approximate any
distribution $q(u)$ arbitrarily close by the empirical
distribution $\overline{q}^{(n)}(u)$ of an appropriate
{\em deterministic} sequence $u^n$ such that $\max_{u}
\vert \overline{q}^{(n)}(u) - q(u) \vert \rightarrow
0$. This fact combined with
(\ref{eqn-Pf-Thm2-max-avg-error-relation}) yields that
\begin{align}
\betaOL \ \geq\ \hspace{3.05in} \nonumber \\
\max_{q(u)} \min_{i < j} \max_{s \in [0,1]} -\sum_{u}
q(u) \log{\left( \int_{y} p_i^{u} \left( y \right)^s
p_j^{u} \left( y \right)^{1-s} d \mu_u \left( y \right)
\right)}, \label{eqn-Pf-Thm2-lowerbd-OLexponent3}
\end{align}
and that the error exponent on the right-side of
(\ref{eqn-Pf-Thm2-lowerbd-OLexponent3}) is achievable
by pure open-loop control.

Next, we prove that the reverse inequality of
(\ref{eqn-Pf-Thm2-lowerbd-OLexponent3}).  Since we
proved that $\betaOL$ is achievable by pure control, we
restrict our attention to pure open-loop control. By
considering the necessary and sufficient condition for
the maximizing $s$ of the function
\begin{equation}
        - \sum_{k=1}^n
        \log{\left(
            \int_{y_k} p_i^{u_k} \left( y_k \right)^s
            p_j^{u_k} \left( y_k \right)^{1-s}
            d \mu_{u_k} \left( y_k \right)
        \right)},
\nonumber
\end{equation}
we obtain for any $u^n \in \mathcal{U}^n,$ and any pair
of hypotheses $i, j \in \mathcal{M}$ that
\begin{equation}
\hspace{-0.0in}
s^* = \mathop{\mbox{argmax}}_{s \in [0,1]}
        -\sum_{k=1}^n
        \log{\left(
            \int_{y_k} p_i^{u_k} \left( y_k \right)^s
            p_j^{u_k} \left( y_k \right)^{1-s}
            d \mu_{u_k}
        \right)}
\label{eqn-Pf-Thm2-argmax-s-OL}
\end{equation}
satisfies
(cf.(\ref{eqn-FSS-projected-distribution}))
\begin{align}
\max_{s \in \left [ 0, 1 \right ]}
    \, - \sum_{k=1}^n
    ~\log \left(
    \int_{y_k} p_i^{u_k} \left( y_k \right)^s
    p_j^{u_k} \left( y_k \right)^{1-s}
    \mu_{u_k} \left( y_k \right)
    \right)	\hspace{0.35in} \nonumber
\end{align}
\vspace{-0.17in}
\begin{align}
 	=~
    \sum_{k=1}^n
    \hspace{-0.02in}
    D \left( b^{u_k, s^*}_{ij} \| p_i^{u_k} \right)
    ~=~
    \sum_{k=1}^n
        \hspace{-0.02in}
    D \left( b^{u_k, s^*}_{ij} \| p_j^{u_k} \right).
    \label{eqn-Pf-Thm2-equalization}
\end{align}
We next consider, for the same pair of hypotheses $i,j$
as above, the pdf/pmf $\tilde{p}$ defined by $\tilde{p}
\left( y^n \right) \ \triangleq\
\prod_{k=1}^nb_{ij}^{u_k, s^*} \left( y_k \right). $
~For any test, it either holds that
\begin{equation}
    \tilde{\mathbb{P}}
    \left \{
        \delta \left( Y^n \right) = i
    \right \} \,\geq\,  \frac{1}{2},\ \mbox{or~that}\ \ 
    \tilde{\mathbb{P}}
    \left \{
        \delta \left( Y^n \right) \neq i
    \right \} \,\geq\,  \frac{1}{2}.
    \label{eqn-Pf-Thm2-conv-case}
\end{equation}

Suppose that the first case of
(\ref{eqn-Pf-Thm2-conv-case}) holds.  
For any causal
control policy, under the stationary Markovity
assumption and assumption (\ref{eqn-assumption-2}), it
follows that the random process $S_k,\ k = 1, \ldots,
n,$ where
\begin{align}
S_k 
    \hspace{-0.02in}	
\triangleq 
    \hspace{-0.02in}	
    \sum_{l = 1}^k 
    \hspace{-0.03in}	
    \left(
    \log{
            \left(
            \frac{b^{u_l, s^*}_{ij} \left( Y_l \right)}
            {p_j^{u_l}\left( Y_l \right)}
            \right)
    }
    \hspace{-0.035in}
    -
    \hspace{-0.015in}
    \mathbb{E} \left [
            \log{
            \left(
                \frac{b^{u_l, s^*}_{ij} \left( Y_l \right)}
            {p_j^{u_l}\left( Y_l \right)} \right)
         }
    \bigg \vert \mathcal{F}_{l-1}
    \right ]
\right), 
\nonumber
\end{align}
is a ``stable'' martingale adapted to $\mathcal{F}_k,$
the sigma fields generated by $\left( Y^k, U^k
\right),\ k = 1, \ldots, n.$  By the martingale
stability theorem of Lo\`eve \cite[pp.
53]{loev-prob-thy-II-book-1978},  we get that $\left \{
\frac{1}{n} S_n \right \}_{n=1}^{\infty}$ converges to
zero a.s. and, hence, in probability, i.e., for any
$\eta > 0,$
\begin{align}
\lim_{n \rightarrow \infty} \tilde{\mathbb{P}} \left \{
\frac{1}{n} \sum_{k=1}^n 
\left( 
\begin{array}{ll}
\log{\left(
\frac{b^{u_k, s^*}_{ij} \left( Y_k \right)}
          {p_j^{u_k} \left( Y_k \right)} \right)}  \ -\\
    \mathbb{E} \left [
     \log{\left( \frac{b^{u_k, s^*}_{ij} \left( Y_k \right)}
          {p_j^{u_k} \left( Y_k \right)} \right)}
          \bigg \vert \mathcal{F}_{k-1}
        \right ]
\end{array}
\right)
> \eta
\right \} =  0.
\label{eqn-Pf-Thm2-p-conv-martingale}
\end{align}
Since $u_k,\ k = 1, \ldots, n$ are fixed (pure
open-loop control policy), we obtain from
(\ref{eqn-Pf-Thm2-p-conv-martingale}) that
\begin{align}
\hspace{-0.05in}
\lim_{n \rightarrow \infty} \tilde{\mathbb{P}} \left \{
\frac{1}{n} \sum_{k=1}^n \left( 
\begin{array}{ll}
\log{\left(\frac{b^{u_k, s^*}_{ij} \left( Y_k \right)}
          {p_j^{u_k} \left( Y_k \right)} \right)} \\
     -~  D \left( b_{ij}^{u_k, s^*} \| p_j^{u_k} \right)
\end{array}
\right)
> \eta
\right \} = 0.
\label{eqn-Pf-Thm2-p-conv-martingale-2}
\end{align}
The first inequality of (\ref{eqn-Pf-Thm2-conv-case})
and (\ref{eqn-Pf-Thm2-p-conv-martingale-2}) yield that
for any $\epsilon' > 0,$ any $\eta > 0$ and all $n$
large,
\begin{align}
 \frac{1}{2} - \epsilon'
 \hspace{3.1in} \nonumber
\end{align}
\begin{align}
 &\leq~ \tilde{\mathbb{P}}
    \left \{
        \hspace{-0.01in}
        \begin{array}{ll}
        \delta \left( Y^n \right) \ =\  i, \\
        \prod\limits_{k=1}^n p_j^{u_k} \left(Y_k \right)
        \,>\,
        \begin{array}{ll}
        	  e^{-n \left(
            \sum\limits_{k=1}^n \frac{1}{n}
                d \left(
                D \left( b_{ij}^{u_k, s^*} \| p_j^{u_k} \right)
                \right)
                 + \eta
            \right)} \\
        \times \prod\limits_{k=1}^n b_{ij}^{u_k, s^*} \left( Y_k \right)
        \end{array}
        \end{array}
        \hspace{-0.01in}
    \right \}   \nonumber \\
    &\leq~
    \mathbb{P}_j \left \{ \delta \left( Y^n \right) \neq j \right \}
    e^{n \left(
            \sum\limits_{k=1}^n \frac{1}{n}
                D \left( b_{ij}^{u_k, s^*} \| p_j^{u_k} \right)
                 + \eta
            \right)}.
            \label{eqn-Pf-Thm2-upperbd-exponent-hypo-j}
\end{align}
If the second case of (\ref{eqn-Pf-Thm2-conv-case})
holds instead, then similar to
(\ref{eqn-Pf-Thm2-p-conv-martingale}) we obtain that
\begin{align}
\lim_{n \rightarrow \infty} ~\tilde{\mathbb{P}} \left \{
\frac{1}{n} \sum_{k=1}^n \left( 
\begin{array}{ll}
\log{\left(
\frac{b^{u_k, s^*}_{ij} \left( Y_k \right)}
          {p_i^{u_k} \left( Y_k \right)} \right)} \\
     -~ D\  \left( b_{ij}^{u_k, s^*} \| p_i^{u_k} \right) 
\end{array}
\right)
~>~ \eta
\right \} \ =\  0.
\label{eqn-Pf-Thm2-p-conv-martingale-3}
\end{align}
From the second case of (\ref{eqn-Pf-Thm2-conv-case})
and (\ref{eqn-Pf-Thm2-p-conv-martingale-3}), we obtain
that for any $\epsilon'
> 0$ and any $\eta > 0,$
\begin{align}
\hspace{-0.11in}
\frac{1}{2} - \epsilon'  
\leq   \mathbb{P}_i \left \{
\delta \left( Y^n \right) \neq i \right \}
    e^{n \left(
            \sum\limits_{k=1}^n \frac{1}{n}
                D \left( b_{ij}^{u_k, s^*} \| p_i^{u_k} \right)
                 + \eta
            \right)}, 
            \label{eqn-Pf-Thm2-upperbd-exponent-hypo-i}
\end{align}
which parallels
(\ref{eqn-Pf-Thm2-upperbd-exponent-hypo-j}). It now
follows from
(\ref{eqn-Pf-Thm2-upperbd-exponent-hypo-j}),
(\ref{eqn-Pf-Thm2-upperbd-exponent-hypo-i}) and
(\ref{eqn-Pf-Thm2-equalization}) that for any $i, j \in
\mathcal{M},$
\begin{align}
\lim_{n \rightarrow \infty}
-\frac{1}{n} \log{ \left( \max_{i \in \mathcal{M}}
~\mathbb{P}_i \left \{ \delta
\neq i\right \} \right) }
\hspace{1.6in}		\nonumber 
\end{align}
\begin{align}
&\leq~ \lim_{n \rightarrow \infty} -\frac{1}{n} \log{
\left( \mbox{max} \left(
\mathbb{P}_i \left \{ \delta \left( Y^n \right) \neq
i\right \},
\mathbb{P}_j \left \{ \delta \left( Y^n \right) \neq j
\right \}
 \right) \right)}  \nonumber \\
&\leq~ \mbox{max} \left(
 \frac{1}{n} \sum\limits_{k=1}^n
 D \left( b^{u_k, s^*}_{ij} \| p_i^{u_k} \right),
 \frac{1}{n} \sum\limits_{k=1}^n
 D \left( b^{u_k, s^*}_{ij} \| p_j^{u_k} \right)
\right)
\nonumber \\
&= \frac{1}{n} \sum_{k=1}^n
           -\log \left(
           \int_{y_k} p_i^{u_k} \left( y_k \right)^{s^*}
           p_j^{u_k} \left( y_k \right)^{1-s^*}
           d \mu_{u_k} \left( y_k \right)
           \right)
           \nonumber \\
&= -\sum_{u} \overline{q}(u)
           \log{ \left(
           \int_{y} p_i^{u} \left( y \right)^{s^*}
           p_j^{u} \left( y \right)^{1-s^*}
           \mu_u \left( y \right)
           \right) } \nonumber \\
&= \max_{s \in [0,1]} - \sum_{u} \overline{q}(u)
           \log{ \left(
           \int_{y} p_i^{u} \left( y \right)^{s}
           p_j^{u} \left( y \right)^{1-s}
           \mu_u \left( y \right)
           \right) },
\label{eqn-Pf-Thm2-upperbd-OLexponent-ij}
\end{align}
where $\overline{q}$ denotes the empirical
distribution of $u^n$. Since
(\ref{eqn-Pf-Thm2-upperbd-OLexponent-ij}) must hold for
every pair $i, j$ of hypotheses, we then get that
\begin{align}
\lim_{n \rightarrow \infty} -\frac{1}{n} \log{ \left(
\max_{i \in \mathcal{M}} ~\mathbb{P}_i \left \{ \delta
\left( Y^n \right) \neq i\right \} \right)}
\hspace{1.36in}
\nonumber \\
\leq  \min_{i < j}
 \max_{s \in [0,1]} - \sum_{u} \overline{q}(u)
           \log{ \left(
           \int_{y} p_i^{u} \left( y \right)^{s^*}
           p_j^{u} \left( y \right)^{1-s^*}
           d \mu_u \left( y \right)
           \right) },
\nonumber
\end{align}
and, hence,
\begin{align}
\betaOL \hspace{3.3in}  \nonumber \\
\leq
 \max_{q(u)}
 \min_{i < j}
 \max_{s} -\sum_{u} q(u)
           \log{ \left(
           \int_{y} p_i^{u} \left( y \right)^{s^*}
           p_j^{u} \left( y \right)^{1-s^*}
           d \mu_u
           \right) }.
           \label{eqn-Pf-Thm2-upperbd-OLexponent}
\end{align}
Note that in (\ref{eqn-Pf-Thm2-upperbd-OLexponent-ij}),
the empirical distribution $\overline{q}(u)$ depends
only on the pure control $u^n$ and {\em not} on the
pair of hypotheses $i, j$, while the maximizer $s^*$ in
(\ref{eqn-Pf-Thm2-argmax-s-OL}) depends {\em both} on
$u^n$ and on the pair of hypotheses. The assertion of
Theorem 2 is now proved by combining
(\ref{eqn-Pf-Thm2-lowerbd-OLexponent3}) and
(\ref{eqn-Pf-Thm2-upperbd-OLexponent}).

\vspace{0.1in}
\subsubsection{Proof of Theorem \ref{Thm3}}
\label{app-A-2}

We first prove that $\betaC$ is achievable by a pure
control policy. For any fixed $n$, the problem of
finding the optimal causal control that
minimizes the {\em exact} average probability of error
can be cast as a finite-horizon stochastic optimal
control problem through the use of the posterior
distribution as a sufficient statistic. Since
$\,\mathcal{U}\,$ is finite, it follows from a standard
dynamic programming argument
\cite{bert-dyn-prog-opt-cont-book-2000} that the
optimal causal control is a deterministic one.

Next, we prove the upper bound for $\betaC$ in
(\ref{eqn:FSS-multiplehypo-opt-exponents}).
Observe that for any test for $M$ hypotheses, with a
decision rule $\delta$
and any pair of hypotheses $i, j \in \mathcal{M},$ a
binary test
for hypotheses $i$ and $j,$  can be constructed
using the same control policy and an appropriate
decision rule $\tilde{\delta}$ so that
\begin{align}
\max \left( \mathbb{P}_i \left \{ \tilde{\delta} \left(
Y^n, U^n \right) \neq i \right \}, \mathbb{P}_j \left
\{ \tilde{\delta} \left( Y^n, U^n \right) \neq j \right
\} \right)
\hspace{0.2in}
\nonumber \\
\leq\ \  \max_{i \in \mathcal{M}} ~\mathbb{P}_i
\left \{ \delta \left( Y^n, U^n \right) \neq i \right
\}. \nonumber
\end{align}
Applying the converse part of Theorem 1 with the roles
of $\left \{ p_0^u \right \}_{u \in \mathcal{U}}$ and
$\left \{ p_1^u \right \}_{u \in \mathcal{U}}$ therein
being played by $\left \{ p_i^u \right \}_{u \in
\mathcal{U}}$ and $\left \{ p_j^u \right \}_{u \in
\mathcal{U}},$ respectively, we obtain that
\begin{equation}
\betaC ~\leq~
    \max_{u \in \mathcal{U}}
    ~\max_{s \in \left [ 0, 1 \right ]}
           ~-\log \left(
           \int_{y} p_i^u \left( x \right)^s
           p_j^u \left( y \right)^{1-s}
           d \mu_u \left( y \right)
           \right).
           \nonumber
\end{equation}
As the previous argument applies for any $i \neq j,\ i,
j \in \mathcal{M},$ we obtain the upper bound in
(\ref{eqn:FSS-multiplehypo-opt-exponents}) by
minimizing over all pairs of hypotheses $i, j \in
\mathcal{M}$.

It is then only left to prove the lower bound for
$\betaC$ in (\ref{eqn:FSS-multiplehypo-opt-exponents}).
The proof relies on the following lemma whose proof is
deferred to Appendix \ref{app-A-3}.

\begin{lemma}
\label{lemma-1}
Let $J = \vert \mathcal{Y} \vert$. For every $\epsilon,
0 < \epsilon < 1,$ and $\eta > 0,$ it holds that
\begin{align}
    \sup_{\nu}
    \min_{u}
    \max_i
    ~\sum_{y} p_i^u \left( y \right)
    \left (
    \frac{
    1 + \epsilon ( J\, p_i^u \left( y \right) - 1 )
    }
    {
    1 + \epsilon (
    \sum\limits_{ j \neq i }
     \frac{ J\, \nu \left( j \right)  p_j^u \left( y \right)}
     {\left( 1 - \nu \left( i \right) \right)}
     - 1 )
    }
    \right )^{ -\frac{\eta}{ J \epsilon }}
	\nonumber 
\end{align}
\begin{align}
\hspace{0.4in}
\geq\  \ e^{-\betaC}, \label{Pf-Thm3-lower-bound-1}
\end{align}
where the outer supremum on the left-side of
(\ref{Pf-Thm3-lower-bound-1}) is over the set of all
pmfs on $\mathcal{M}$ that are not point-mass
distributions.
\end{lemma}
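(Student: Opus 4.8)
The plan is to read (\ref{Pf-Thm3-lower-bound-1}) as an achievability claim: it is equivalent to $\betaC \ge -\log(\text{LHS})$ for the given $\epsilon,\eta$, so it suffices to exhibit, for each fixed $\epsilon\in(0,1)$ and $\eta>0$, a sequence of pure causal tests (one per horizon $n$) whose maximal error probability is at most $\text{poly}(n)\cdot(\text{LHS})^{n}$. I would use a variant of the test of Section~\ref{subsubsec-FSS-CS} built around a \emph{smoothed} likelihood. Put $J\triangleq|\mathcal Y|$, fix the uniform prior on $\mathcal M$ (harmless, as $\betaC$ is unchanged under a Bayesian error criterion), and let $\tilde p^{u}_m\triangleq\frac{1-\epsilon}{J}+\epsilon\,p^{u}_m$ be the $\epsilon$-mixture of $p^{u}_m$ with the uniform pmf on $\mathcal Y$, so each $\tilde p^u_m$ has full support. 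Write $\tilde\pi_k$ for the ``pseudo-posterior'' $\tilde\pi_k(m)\propto\prod_{l\le k}\tilde p^{u_l}_m(Y_l)$, and for a non-point-mass pmf $\nu$ let $u^{\ast}(\nu)$ be any control attaining the inner $\min_u\max_i(\cdots)$ in (\ref{Pf-Thm3-lower-bound-1}) (which exists since $\mathcal U,\mathcal M$ are finite). The test chooses $U_{k+1}=u^{\ast}(\tilde\pi_k)$ and at the terminal time $n$ declares $\delta=\mathop{\mathrm{argmax}}_m\tilde\pi_n(m)$. Pure control suffices for $\betaC$ by the same finite-horizon dynamic-programming argument as for $\betaOL$, and $\tilde\pi_k$ is a non-point-mass pmf on every sample path (all $\tilde p^u_m>0$), so $u^{\ast}(\tilde\pi_k)$ is always well defined.

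For the error analysis, fix the true hypothesis $i$. Writing out the pseudo-posterior recursion and cancelling the hypothesis-independent control factors, the odds-against-$i$ process $\tilde L_k\triangleq(1-\tilde\pi_k(i))/\tilde\pi_k(i)$ factorizes as $\tilde L_n=(M-1)\prod_{k=1}^{n}\tilde{\bar p}^{\,u_k}_{i,\tilde\pi_{k-1}}(Y_k)\big/\tilde p^{u_k}_i(Y_k)$, where $\bar p^{\,u}_{i,\nu}\triangleq\sum_{j\ne i}\tfrac{\nu(j)}{1-\nu(i)}p^{u}_j$ and $\tilde{(\cdot)}$ denotes $\epsilon$-smoothing (smoothing commutes with mixing, so $\tilde{\bar p}^{\,u}_{i,\nu}=\sum_{j\ne i}\tfrac{\nu(j)}{1-\nu(i)}\tilde p^{u}_j$); crucially the belief state appearing in the $k$-th factor is the \emph{same} $\tilde\pi_{k-1}$ that selects $u_k$. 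The decision errs only if $\tilde\pi_n(i)\le\tfrac12$, i.e.\ $\tilde L_n\ge1$, so Markov's inequality with exponent $\theta\triangleq\eta/(J\epsilon)>0$ gives $\mathbb P_i\{\delta\ne i\}\le\mathbb E_i[\tilde L_n^{\theta}]$. Peel off the time steps by the tower property: given $\mathcal F_{k-1}$ one has $Y_k\sim p^{u_k}_i$ under $\mathbb P_i$, so the conditional expectation of the $k$-th factor raised to $\theta$ equals
\[
\sum_{y}p^{u_k}_i(y)\Big(\tfrac{\tilde{\bar p}^{\,u_k}_{i,\tilde\pi_{k-1}}(y)}{\tilde p^{u_k}_i(y)}\Big)^{\eta/(J\epsilon)}
=\sum_{y}p^{u_k}_i(y)\Big(\tfrac{\tilde p^{u_k}_i(y)}{\tilde{\bar p}^{\,u_k}_{i,\tilde\pi_{k-1}}(y)}\Big)^{-\eta/(J\epsilon)},
\]
which is \emph{exactly} the quantity being $\max_i$-ed in (\ref{Pf-Thm3-lower-bound-1}) at $(\nu,u,i)=(\tilde\pi_{k-1},u_k,i)$ --- here $J\tilde p^{u}_i(y)=1+\epsilon(Jp^{u}_i(y)-1)$ and $J\tilde{\bar p}^{\,u}_{i,\nu}(y)=1+\epsilon(\sum_{j\ne i}\tfrac{J\nu(j)p^{u}_j(y)}{1-\nu(i)}-1)$. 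This identity --- the whole reason for building the test on the smoothed densities --- means no relaxation is needed: the $\epsilon$ and the $J$ in (\ref{Pf-Thm3-lower-bound-1}) are precisely the mixing parameter and the alphabet size of the pseudo-posterior. Since $u_k=u^{\ast}(\tilde\pi_{k-1})$, the last display is $\le\max_{i'}(\cdots)\le\min_u\max_{i'}(\cdots)\le\sup_\nu\min_u\max_{i'}(\cdots)=:R$, a constant; induction then yields $\mathbb E_i[\tilde L_n^{\theta}]\le(M-1)^{\theta}R^{n}$. Hence $\mathbb P_i\{\delta\ne i\}\le(M-1)^{\theta}R^{n}$ for every $i$, the test's error exponent is at least $-\log R$, and therefore $\betaC\ge-\log R$, i.e.\ $R\ge e^{-\betaC}$, which is (\ref{Pf-Thm3-lower-bound-1}).

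Two points require care. First, one checks that every $\tilde\pi_k$ along a trajectory is a genuine non-point-mass pmf so that it lies in the domain of the outer supremum and $\tilde{\bar p}^{\,u}_{i,\tilde\pi_{k-1}}$ (hence the $1/(1-\nu(i))$ factor) is well defined; this is immediate because the $\tilde p^{u}_m$ all have full support, and it is also why the supremum in (\ref{Pf-Thm3-lower-bound-1}) must exclude point masses. Second, to deduce the lower bound displayed in Theorem~\ref{Thm3} one lets $\epsilon\downarrow0$ with $\theta=\eta/(J\epsilon)\to\infty$: using $\log(1+\epsilon x)=\epsilon x+o(\epsilon)$, the summand of (\ref{Pf-Thm3-lower-bound-1}) converges to $p^{u}_i(y)\exp\!\big(\tfrac{\eta[\nu\circ p^{u}(y)-p^{u}_i(y)]}{1-\nu(i)}\big)$, and a short continuity/compactness argument lets the $\sup_\nu\min_u\max_i$ pass to the limit, so $e^{-\betaC}\le\inf_{\epsilon,\eta}R(\epsilon,\eta)$ gives the stated bound.

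The step I expect to be the real obstacle is conceptual rather than computational: recognizing that a deliberately \emph{sub-optimal} test --- one that blurs the observations with uniform noise (discarding almost all per-sample information as $\epsilon\to0$) while compensating with an increasingly aggressive exponent $\theta=\eta/(J\epsilon)$ in the Chernoff/Markov bound --- is exactly what converts the exact per-step conditional moments into the clean closed form of (\ref{Pf-Thm3-lower-bound-1}). Once that design choice is made, the rest is a routine change-of-measure supermartingale peeling; the only other nontrivial verification is that the worst-case belief state can be replaced by the free parameter $\nu$ (the $\sup_\nu$), which is what makes the bound offline-computable.
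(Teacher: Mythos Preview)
Your proposal is correct and follows essentially the same route as the paper's proof: the smoothed densities $\tilde p^{u}_m=\frac{1-\epsilon}{J}+\epsilon\,p^{u}_m$ are exactly the paper's mismatched model $q_i^u=\frac{1}{J}+\frac{\epsilon}{J}(Jp_i^u-1)$, your pseudo-posterior odds $\tilde L_k$ are the reciprocal of the paper's $L_k(i)$ (so your $\theta=\eta/(J\epsilon)>0$ plays the role of the paper's $\lambda=-\eta/(J\epsilon)<0$), and the tower-property peeling with the control $u_k=u^{\ast}(\tilde\pi_{k-1})$ is identical. Your write-up is in fact a bit more explicit than the paper's---you spell out the identity $J\tilde p^{u}_i(y)=1+\epsilon(Jp^{u}_i(y)-1)$ and note that smoothing commutes with mixing---and your middle inequality ``$\max_{i'}(\cdots)\le\min_u\max_{i'}(\cdots)$'' is really an equality (by the choice of $u_k$), but this is harmless.
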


By L'H\^{o}pital's rule, for every $\nu$ that is not a
point-mass distribution,
\begin{align}
\hspace{-0.05in}
\lim_{\epsilon \rightarrow 0} \left (
    \frac{
    1 + \epsilon ( J\, p_i^u \left( y \right) - 1 )
    }
    {
    1 + \epsilon (
    \sum\limits_{ j \neq i }
    \hspace{-0.05in}
     \frac{ J\, \nu \left( j \right)  p_j^u \left( y \right)}
     {\left( 1 - \nu \left( i \right) \right)}
     - 1 )
    }
    \right )^{ -\frac{\eta}{ J \epsilon }}
    \hspace{-0.25in}
&\rightarrow\,
   e^{
    \eta \left (
                \sum\limits_{ j \neq i }
            \hspace{-0.05in}
            \frac{\nu \left ( j \right) p_j^u \left( y \right)}
            {\left( 1 - \nu \left( i \right) \right)}
            \ -\  p_i^u \left (y \right)
               \right )
    }
    \nonumber \\
&=\,
   e^{
    \frac{
    \eta (
                \nu \circ p^u \left( y \right) - p_i^u \left (y \right)
            )
    }
    {\left( 1 - \nu \left( i \right) \right)}
    }.
\label{Pf-Thm3-lower-bound-3}
\end{align}
Consequently, by letting $\epsilon \rightarrow 0,$ we
get from Lemma \ref{lemma-1} and
(\ref{Pf-Thm3-lower-bound-3}) through the finiteness of
$\mathcal{M},~\mathcal{U},~ \mathcal{Y}$ that for any
$\eta
> 0$,
\begin{equation}
- \log{
    \left(
    \sup_{\nu}
    \min_{u}
    \max_i
    \left(
    \sum_{y} p_i^u \left( y \right)
    e^{
    \frac{
    \eta \left [
                \nu \circ p^u \left( y \right) - p_i^u \left (y \right)
                \right ]
    }
    {\left( 1 - \nu \left( i \right) \right)}
    }
    \right)
    \right)
    }
    \ \ \leq\ \  \betaC. \nonumber
\end{equation}
The proof of Theorem 3 follows by optimizing over
$\eta > 0$.

\subsubsection{Proof of Lemma \ref{lemma-1}}
\label{app-A-3}

We shall consider a test based on a mismatched
posterior distribution on the hypothesis.
In particular, the control value at every time is
picked based on the posterior distribution on
$\mathcal{M}$ computed based on an appropriately chosen
mismatched model $\left \{ q_i^u \right \}_{i \in
\mathcal{M}}^{ u \in \mathcal{U}}$ (instead of the real
model $\left \{ p_i^u \right \}_{i \in \mathcal{M}}^{ u
\in \mathcal{U}}$) and the uniform prior distribution
on $\mathcal{M}$.
In particular, denote the posterior probability of
hypothesis $i \in \mathcal{M}$ at time $k = 0, \ldots,
n,$ by $\nu_k \left( i \right)$. Then,
\begin{align}
    \nu_0 \left( i \right) \ =\ \frac{1}{M},\
    \nu_k \left( i \right)
    \ =\  \frac{ \prod\limits_{l=1}^k q_i^{u_l \left( y^{l-1} \right) } \left( y_l \right) }
    {\sum\limits_j \prod\limits_{l=1}^k q_j^{u_l \left( y^{l-1} \right) } \left( y_l \right)},
    \ 1 \leq k \leq n.
    \nonumber
\end{align}
Also denote the likelihood ratio for hypothesis $i \in
\mathcal{M}$ at time $k = 0, \ldots, n,$ by $l_k \left(
i \right)$, i.e.,
\begin{align}
    l_0 \left( i \right) \ =\  \frac{1}{M-1}, \ \ 
    l_k \left( i \right) \ \triangleq\ \frac{\nu_k  \left( i \right)}{1- \nu_k \left(i \right)}
    \ =\ \frac{\nu_k \left( i \right)}{\sum\limits_{j \neq i} \nu_k \left( j \right)},
    \hspace{0.1in}
    \nonumber \\
    l_{k+1} \left( i \right) \,=\,
    \frac{\nu_k \left( i \right) q_i^{u_{k+1}\left( \nu_k \left( y^k \right) \right)}
                \left( y_{k+1} \right)}
    { \sum\limits_{j \neq i}
    \nu_k \left( j \right) q_j^{u_{k+1}\left( \nu_k \left( y^k \right) \right)} \left( y_{k+1} \right)
    },\ 0 \leq k \leq n-1.
    \label{Pf-Lemma1-2}
\end{align}
The decision rule at time $n$ is the maximum likelihood
estimate of the hypothesis, i.e., $\,\delta \left( y^n
\right) =
    \mathop{\mbox{argmax}}_{i} ~\nu_n \left( i \right).\,$
Next, we analyze the probability of error of such test
as a
function of $\left \{ q_i^u \right \},$~
$i \in \mathcal{M},\ u \in \mathcal{U},$ and the pure
control $u_k = u_k \left( \nu_{k-1} \right) =  u_k
\left( y^{k-1} \right)$ which will be specified later.
We get that for any $\lambda < 0$, the probability of
error (with respect to the {\em real} model $\left \{
p_i^u \right \},\ i \in \mathcal{M},\ u \in
\mathcal{U}$) under hypothesis $i$ can be upper bounded
as
\begin{align}
\mathbb{P}_i \left \{ \delta \neq i \right \}
    &\ =\  \mathbb{P}_i \left \{
    \mathop{\mbox{argmax}}_{j} \Pi_n \left( j \right) \neq i
    \right \}		\nonumber \\
    &\ \leq\   \mathbb{P}_i \left \{
            L_n \left( i \right) \leq 1
    \right \}
    \ \leq\   \mathbb{E}_i \left [ L_n \left( i \right)^{\lambda} \right ].
    \label{Pf-Lemma1-4}
\end{align}
Next, by writing
\begin{align}
 L_n \left( i \right) \,=\,  \prod_{k=1}^n \left ( \frac{L_k \left( i \right)}
                                                {L_{k-1}\left( i \right)}
                                                \right)
                                                L_0 \left( i \right)
                     \,=\,  \prod_{k=1}^n \left ( \frac{L_k \left( i \right)}
                                                {L_{k-1}\left( i \right)} \right)
                                                \frac{1}{M-1},
    \label{Pf-Lemma1-5}
\end{align}
and substituting (\ref{Pf-Lemma1-5}) into
(\ref{Pf-Lemma1-4}), we get that for any $\lambda < 0,$
\begin{equation}
    \mathbb{P}_i \left \{
    \delta \neq i
    \right \}
    \ \leq\  \mathbb{E}_i \left [
    \prod_{k=1}^n \left ( \frac{L_k \left( i \right)}
                                                {L_{k-1}\left( i \right)} \right)^{\lambda}
    \right ] (M-1)^{- \lambda}.
    \label{Pf-Lemma1-6}
\end{equation}
We next specify the mismatched model $\left \{ q_i^u
\right \},\ i \in \mathcal{M},\ u \in \mathcal{U}$. For
any $\epsilon,\ 0 < \epsilon < 1$, consider the
conditional pmf $W_{\epsilon} \left( y \vert y'
\right),\ y, y' \in \mathcal{Y},$ such that
\begin{equation}
W_{\epsilon} \left( y \vert y' \right) ~=~ \left \{
\begin{array}{cc}
    \frac{1}{J} + \frac{J-1}{J} \epsilon, & y = y', \\
    \frac{1}{J} - \frac{1}{J} \epsilon, & y \neq y'.
\end{array}
\right.
\end{equation}
Then, let
\begin{align}
q_i^u \left( y \right)
 &~=~ p_i^u \circ W_{\epsilon} \left( y \right)
        ~\triangleq~ \sum\limits_{y'}
        p_i^u \left( y' \right) W_{\epsilon} \left( y \vert y' \right)
        \nonumber \\
 &~=~ p_i^u \left( y \right) \left( \frac{1}{J} + \frac{(J-1) \epsilon}{J}  \right)
         \,+\, \sum\limits_{y' \neq y}
         p_i^u \left( y' \right) \left( \frac{1}{J} - \frac{\epsilon}{J}  \right)
         \nonumber \\
 &~=~  \frac{1}{J} ~+~
        \frac{\epsilon}{J} \left( J p_i^u \left( y \right) - 1 \right).
            \label{Pf-Lemma1-7}
\end{align}
Using this particular $\left \{ q_i^u \right \},\ i \in
\mathcal{M},\ u \in \mathcal{U},$ with
$\mathcal{F}_{k-1}$ denoting the sigma field generated
by $y^{k-1},\ k = 1, \ldots, n$, we get from
(\ref{Pf-Lemma1-2}) through an easy algebraic
manipulation that
\begin{align}
\mathbb{E}_i \left [ \left( \frac{L_k \left( i \right)}
        {L_{k-1} \left( i \right)} \right)^{\lambda}
        \,\Big \vert \mathcal{F}_{k-1} \right ]
\hspace{1.9in} 	\nonumber
\end{align}
\vspace{-0.2in}
\begin{align}
&~=~ \sum_{y} p_i^{u_k} \left( y \right)
    \left (
    \frac
    {
    \left( 1 - \nu_{k-1}\left( i \right) \right) q_i^{u_k} \left( y \right)
    }
    {
    \sum\limits_{j \neq i} \nu_{k-1}\left( j \right) q_j^{u_k}\left( y \right)
    }
    \right )^{\lambda} \nonumber \\
&~=~
    \sum_{y} p_i^{u_k} \left( y \right)
    \left (
    \frac{
    1 ~+~ \epsilon \left [ J\, p_i^{u_k} \left( y \right) - 1 \right ]
    }
    {
    1 ~+~ \epsilon  \left (
    \sum\limits_{ j \neq i }
    \frac{
    J\, \nu_{k-1} \left( j \right)  p_j^{u_k} \left( y \right)
    }
    {
     \left( 1 - \nu_{k-1} \left( i \right) \right)
    } 
     \,-\, 1  \right )
    }
    \right )^{\lambda}, 
    \nonumber
\end{align}
where $u_k = u_k \left( \nu_{k-1} \right) = u_k \left(
y^{k-1} \right)$. Next, let $\lambda = -\frac{\eta}{J
\epsilon}$ for an arbitrary $\eta > 0$, and let
\vspace{-0.2in}
\begin{align}
u^* \left( \nu \right)
= \mathop{\mbox{argmin}}_{u}
    \max_{i}
    \sum_{y} p_i^{u} \left( y \right)
    \left(
    \hspace{-0.04in}
    \frac{
    1 + \epsilon ( J\, p_i^{u} \left( y \right) - 1 )
    }
    {
    1 + \epsilon (
    \sum\limits_{ j \neq i }
    	 \hspace{-0.02in}
      \frac{J \nu \left( j \right)  p_j^{u} \left( y \right)}
      {\left( 1 - \nu \left( i \right) \right)}
     - 1 )
    }
    \hspace{-0.04in}	
    \right)^{\hspace{-0.08in} -\frac{\eta}{J \epsilon}}.
\label{Pf-Lemma1-9}
\end{align}
If we select the control to be $u_k = u^*
\left( \nu_{k-1} \right)$, where $u^* \left( \nu
\right)$ is as in (\ref{Pf-Lemma1-9}), then we get that
for any $i \in \mathcal{M},\ k = 2, \ldots, n$, and any
realization of $\nu_{k-1}$ (as a function of
$y^{k-1}$),
\begin{align}
\mathbb{E}_i \hspace{-0.02in} \left [
        \left( \frac{L_k \left( i \right)}
        {L_{k-1} \left( i \right)} \right)^{
        \hspace{-0.05in}-\frac{\eta}{J \epsilon}}
        \,\Bigg \vert \mathcal{F}_{k-1}
        \right ]
\hspace{1.8in} \nonumber \\
\ =\
 \min_u
 \max_{i}\,
    \sum_{y} p_i^{u} \left( y \right)
    \left (
    \hspace{-0.05in}
    \frac{
    1 + \epsilon ( J\, p_i^{u} \left( y \right) - 1 )
    }
    {
    1 + \epsilon (
    \sum\limits_{ j \neq i }
     \frac{J\, \nu_{k-1} \left( j \right)  p_j^{u} \left( y \right)}
     {\left( 1 - \nu_{k-1} \left( i \right) \right)}
     - 1 )
    }
    \hspace{-0.04in}
    \right )^{\hspace{-0.09in} -\frac{\eta}{J \epsilon}}.
    \nonumber
\end{align}

\noindent
Note that since $\nu_0$ (uniform) and all $q_i^u,\ i
\in \mathcal{M},\ u \in \mathcal{U}$, have full
supports (cf. (\ref{Pf-Lemma1-7}) upon noting that
$\epsilon < 1$), it follows that for every $k = 1,
\ldots, n,$ and every realization $y^k,\ \nu_k \left(
y^k \right)$ will have a full support.  With this
observation, continuing from (\ref{Pf-Lemma1-6}) by
using the smoothing property of conditional
expectation, we get that
\begin{align}
e^{-\betaC}
\,\leq\,  \left
( \max_{i \in \mathcal{M}}\ \mathbb{P}_i \left \{
\delta \neq i \right \} \right )^{\hspace{-0.03in} \frac{1}{n}}
\hspace{1.7in} \nonumber \\
\leq\,  \sup_{\nu}\min_u\max_{i}
    \sum_{y} p_i^{u} \left( y \right)
    \left (
    \frac{
    1 + \epsilon ( J\, p_i^{u} \left( y \right) - 1 )
    }
    {
    1 + \epsilon (
    \sum\limits_{ j \neq i }
     \frac{ J\, \nu \left( j \right)  p_j^{u} \left( y \right)}
     {\left( 1 - \nu \left( i \right) \right)}
     - 1 )
    }
    \right )^{\hspace{-0.07in} -\frac{\eta}{J \epsilon}}
    \nonumber \\
\times \left( M-1
\right)^{\frac{\eta}{n J \epsilon}}. \nonumber
\end{align}
The lemma follows by taking the limit as $n \rightarrow
\infty$.

\subsection{Proofs of Results in Section \ref{section-SS}}
\label{app-B}

\subsubsection{The Converse Proof of Theorem \ref{Thm6}}
\label{app-B-1}

We now prove the assertion
(\ref{eqn-SS-asymp-opt-stoppingtime-CS-risk}). To
simplify notation let
\begin{align}
d_i^* &\ \triangleq\ \max\limits_{q (u)} \,\min\limits_{
j \in{\cal M} \backslash \left \{ i \right \} }
\,\sum\limits_{u } ~q(u)D(p^u_i \| p^u_j),
\nonumber \\
Z_{ij}(n) &\ \triangleq\
\log\frac{p_i(Y^n,U^n)}{p_j(Y^n,U^n)}
\nonumber
\end{align}
It is not hard to see that
(\ref{eqn-SS-asymp-opt-stoppingtime-CS-risk}) follows
immediately from Lemma \ref{lem:stop-time-is-large_whp}
below and Markov inequality.
\begin{lemma}
\label{lem:stop-time-is-large_whp}For every $0<\rho<1,$
any sequence of tests with vanishing maximal 
risk
i.e., 
$\max\limits_{k \in \mathcal{M}} \,R_k ~\rightarrow~0$,
satisfies
\begin{align}
\mathbb{P}_i\Big\{N>\frac{-\rho\log
R_i}{d_i^*}\Big\}\rightarrow 1, \nonumber
\end{align}
for every $i \in \mathcal{M}$.
\end{lemma}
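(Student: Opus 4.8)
The plan is to run a Wald-type change-of-measure converse, arranged so that the adaptivity of the control is absorbed by passing to a least-favorable mixing over the competing hypotheses. Fix $i\in\mathcal{M}$. We may assume $d_i^*>0$, since $d_i^*=0$ would force $p_i^u=p_j^u$ for some $j\ne i$ and every $u\in\mathcal{U}$, making hypotheses $i$ and $j$ indistinguishable under every control, so that no test sequence with vanishing maximal risk exists and the statement is vacuous. Fix $\epsilon>0$ so small that $\rho':=\rho\,(1+\epsilon/d_i^*)<1$, and put $n^*:=\lfloor -\rho\log R_i/d_i^*\rfloor$; since $N$ is integer-valued, $\{N>n^*\}=\{N\ge n^*+1\}\subseteq\{N>-\rho\log R_i/d_i^*\}$, so it suffices to prove $\mathbb{P}_i\{N\le n^*\}\to0$. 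Splitting on the final decision, $\mathbb{P}_i\{N\le n^*\}\le\mathbb{P}_i\{\delta\ne i\}+\mathbb{P}_i\{N\le n^*,\delta=i\}$, and the first term vanishes: summing (\ref{eqn-SS-def-risks}) over $i$ gives $\sum_k R_k=\sum_j\pi(j)\,\mathbb{P}_j\{\delta\ne j\}$, so $\max_k R_k\to0$ and the full support of $\pi$ force $\mathbb{P}_j\{\delta\ne j\}\to0$ for every $j$.

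Next I would isolate the saddle point that neutralizes the control's dependence on the data. Since $\min_{j\ne i}\sum_u q(u)D(p_i^u\|p_j^u)=\min_{\nu}\sum_j\nu(j)\sum_u q(u)D(p_i^u\|p_j^u)$ over pmfs $\nu$ on $\mathcal{M}\setminus\{i\}$, and the objective is bilinear over two finite-dimensional simplices, the minimax theorem yields $d_i^*=\min_{\nu}\max_u\sum_j\nu(j)D(p_i^u\|p_j^u)$, attained at some $\nu^*$. Hence $\sum_{j\ne i}\nu^*(j)\,D(p_i^u\|p_j^u)\le d_i^*$ \emph{simultaneously for every} $u\in\mathcal{U}$ --- in particular for whatever (possibly randomized) control the test applies at each step.

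I would then change measure to the mixture $\tilde{\mathbb{P}}:=\sum_{j\ne i}\nu^*(j)\,\mathbb{P}_j$, a genuine probability measure on the trajectory space with $p_i\ll\tilde p$ on each $\mathcal{F}_n$ by the absolute-continuity consequence of (\ref{eqn-assumption-2}); from $R_i\ge\pi(j)\,\mathbb{P}_j\{\delta=i\}$ one gets $\tilde{\mathbb{P}}\{\delta=i\}\le C_i R_i$ with $C_i:=\sum_j\nu^*(j)/\pi(j)$. Setting $\tilde Z(n):=\log(p_i(Y^n,U^n)/\tilde p(Y^n,U^n))$, the common control factors cancel in each likelihood ratio and Jensen's inequality applied to $-\log\sum_j\nu^*(j)e^{-Z_{ij}(n)}$ gives
\[
\tilde Z(n)\ \le\ \sum_{j\ne i}\nu^*(j)\,Z_{ij}(n)\ =\ \sum_{k=1}^{n}X_k,\qquad X_k:=\sum_{j\ne i}\nu^*(j)\log\frac{p_i^{u_k}(Y_k)}{p_j^{u_k}(Y_k)}.
\]
By the saddle point $\mathbb{E}_i[X_k\mid\mathcal{F}_{k-1}]\le d_i^*$, while (\ref{eqn-assumption-2}) with $|\mathcal{U}|,|\mathcal{M}|<\infty$ bounds the conditional second moment of $X_k$ by a model constant $\sigma^2$; hence the martingale $M_m=\sum_{k\le m}(X_k-\mathbb{E}_i[X_k\mid\mathcal{F}_{k-1}])$ has $\mathrm{Var}_i(M_{n^*})\le n^*\sigma^2$, and Chebyshev's inequality gives $\mathbb{P}_i\{\mathcal{E}^c\}\le\sigma^2/(n^*\epsilon^2)\to0$, where $\mathcal{E}:=\{\sum_{k\le n^*}X_k\le n^*(d_i^*+\epsilon)\}\in\mathcal{F}_{n^*}$ (this weak-law form suffices and is uniform over the test sequence). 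On $\mathcal{E}$ we have $\tilde Z(n^*)\le n^*(d_i^*+\epsilon)$, so, since $\{N\le n^*,\delta=i\}\cap\mathcal{E}\in\mathcal{F}_{n^*}$ and $e^{\tilde Z(n^*)}$ is the Radon--Nikodym derivative of $\mathbb{P}_i$ with respect to $\tilde{\mathbb{P}}$ on $\mathcal{F}_{n^*}$,
\[
\mathbb{P}_i\{N\le n^*,\delta=i,\mathcal{E}\}=\mathbb{E}_{\tilde{\mathbb{P}}}[\mathbf{1}\{N\le n^*,\delta=i,\mathcal{E}\}\,e^{\tilde Z(n^*)}]\le e^{n^*(d_i^*+\epsilon)}\,\tilde{\mathbb{P}}\{\delta=i\}\le C_i R_i^{1-\rho'}\to0.
\]
Adding the three contributions yields $\mathbb{P}_i\{N\le n^*\}\to0$, i.e., $\mathbb{P}_i\{N>-\rho\log R_i/d_i^*\}\to1$.

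The hard part is conceptual rather than computational. Because $U_k$ is data-dependent, the per-step log-likelihood increments against any fixed alternative $j$ carry a random, control-dependent drift, and no single alternative need be uniformly worst over time; the remedy is the least-favorable mixing $\nu^*$ --- which pins the $\nu^*$-averaged conditional drift at or below $d_i^*$ no matter what the controller does --- together with the corresponding change of measure to $\tilde{\mathbb{P}}$ and the use of Jensen's inequality to transport the bound on $\sum_k X_k$ back to the genuine log-likelihood ratio $\tilde Z$. Everything past that is a routine Wald change of measure plus a second-moment martingale estimate, in the spirit of the martingale arguments already used in the proof of Theorem~\ref{Thm2}.
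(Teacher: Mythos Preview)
Your proof is correct and takes a genuinely different route from the paper. The paper decomposes the argument into two pieces: first it proves Lemma~\ref{lem:LLR_large_whp}, which shows via a change of measure to each \emph{individual} alternative $\mathbb{P}_j$ that $Z_{ij}(N)\ge -\rho\log R_i$ with high $\mathbb{P}_i$-probability for every $j\ne i$; second, it invokes a separate martingale growth bound (Chernoff's Lemma~5, stated as (\ref{eq:lemma5_chernoff}) but not proved in the paper) to the effect that $\max_{m\le n}\min_{j\ne i}Z_{ij}(m)$ cannot exceed $n(d_i^*+1-\rho)$ with high probability. Combining these, if $N$ were small then either some $Z_{ij}(N)$ would be too small or $\min_j Z_{ij}$ would have grown impossibly fast. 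You instead identify the least-favorable mixture $\nu^*$ explicitly via the minimax theorem, change measure once to $\tilde{\mathbb{P}}=\sum_j\nu^*(j)\mathbb{P}_j$, and use Jensen's inequality to transport the drift bound $\sum_j\nu^*(j)D(p_i^u\|p_j^u)\le d_i^*$ (valid for every $u$, hence robust to the adaptive control) to the actual log-likelihood $\tilde Z$. The paper's approach is modular --- Lemma~\ref{lem:LLR_large_whp} is a standalone change-of-measure fact, and the heavy lifting on the control-dependent drift is outsourced to \cite{cher-amstat-1959} --- whereas your approach is self-contained and makes the role of the saddle point completely explicit. The two are ultimately powered by the same minimax structure (indeed, bounding $\min_j Z_{ij}(n)\le\sum_j\nu^*(j)Z_{ij}(n)$ is implicitly what Chernoff's Lemma~5 needs), but your packaging is tighter and avoids the appeal to an external lemma.
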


Lemma \ref{lem:stop-time-is-large_whp} in turn relies
on the following lemma.
\begin{lemma}
\label{lem:LLR_large_whp} For any sequence of tests with 
$\max\limits_{k \in \mathcal{M}} \,R_k ~\rightarrow~0,$
any $0<\rho<1$, it holds that for each $j \in {\cal M},$
\begin{align}
\mathbb{P}_i\Big\{Z_{ij}(N)\geq -\rho\log
R_i\Big\}\rightarrow 1.
\end{align}
\end{lemma}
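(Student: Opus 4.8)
The plan is a change‑of‑measure argument between $\mathbb{P}_i$ and $\mathbb{P}_j$ evaluated at the stopping time $N$. Fix $j \in \mathcal{M}\setminus\{i\}$ (the case $j = i$ is vacuous, and only $j \ne i$ is used in Lemma~\ref{lem:stop-time-is-large_whp}) and write $c_i \triangleq -\rho\log R_i$; it suffices to prove $\mathbb{P}_i\{Z_{ij}(N) < c_i\} \to 0$. First I would observe that a vanishing maximal risk already forces $\mathbb{P}_i\{\delta \ne i\} \to 0$: for every $k \ne i$, $\pi(i)\,\mathbb{P}_i\{\delta = k\} \le R_k \le \max_l R_l$, so since $\pi$ has full support and $\mathcal{M}$ is finite, $\mathbb{P}_i\{\delta \ne i\} \le (M-1)\max_l R_l / \pi(i) \to 0$. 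Hence it remains to bound $\mathbb{P}_i(\mathcal{D})$, where $\mathcal{D} \triangleq \{\delta(Y^N,U^N) = i\} \cap \{Z_{ij}(N) < c_i\} \in \mathcal{F}_N$.

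Next, for the change of measure, I would record that, by the stationary Markovity assumption, the control kernels are common to all hypotheses and cancel in the likelihood ratio, so $Z_{ij}(n) = \sum_{k=1}^n \log\big(p_i^{U_k}(Y_k)/p_j^{U_k}(Y_k)\big)$ and $e^{-Z_{ij}(n)}$ is a version of $d\mathbb{P}_j/d\mathbb{P}_i$ on $\mathcal{F}_n$; mutual absolute continuity on $\mathcal{F}_n$ follows from~(\ref{eqn-assumption-2}) applied to the pairs $(i,j)$ and $(j,i)$. Since $N$ is an $\mathcal{F}_k$-stopping time, which I may take to be $\mathbb{P}_i$-a.s.\ finite (otherwise $\mathbb{E}_i[N] = \infty$ and the consequences of the lemma are trivial), summing the finite-horizon identity over the disjoint events $\{N = n\}\in\mathcal{F}_n$ yields the Wald-type identity $\mathbb{P}_j(\mathcal{E}) = \mathbb{E}_i\big[\mathbf{1}_{\mathcal{E}}\,e^{-Z_{ij}(N)}\big]$ for every $\mathcal{E} \in \mathcal{F}_N$. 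Applying this to $\mathcal{E} = \mathcal{D}$ and using $e^{-Z_{ij}(N)} > e^{-c_i}$ on $\mathcal{D}$ gives $\mathbb{P}_i(\mathcal{D}) < e^{c_i}\,\mathbb{P}_j(\mathcal{D}) \le e^{c_i}\,\mathbb{P}_j\{\delta = i\}$.

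To finish, I would bound the error $\mathbb{P}_j\{\delta = i\}$ by a risk: since $i \ne j$, $\pi(j)\,\mathbb{P}_j\{\delta = i\} \le R_i$, hence $\mathbb{P}_j\{\delta = i\} \le R_i/\pi(j)$. Combined with $e^{c_i} = R_i^{-\rho}$ this gives $\mathbb{P}_i(\mathcal{D}) < R_i^{1-\rho}/\pi(j) \to 0$, using $\rho < 1$ and $R_i \le \max_l R_l \to 0$. Adding back the earlier bound on $\mathbb{P}_i\{\delta \ne i\}$ yields $\mathbb{P}_i\{Z_{ij}(N) < c_i\} \le \mathbb{P}_i\{\delta \ne i\} + \mathbb{P}_i(\mathcal{D}) \to 0$, which is the claim.

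The hard part will be the justification of the optional-stopping / change-of-measure identity at the random time $N$: one must confirm $\mathbb{P}_i\{N < \infty\} = 1$ (or dispose of the complement separately), check that $\{N = n\} \in \mathcal{F}_n$, and verify that the per-horizon Radon–Nikodym densities patch together into $e^{-Z_{ij}(N)}$ on $\{N < \infty\}$. All of this is routine given the finiteness of $\mathcal{U}$ and the second-moment condition~(\ref{eqn-assumption-2}); everything else is a two-line manipulation relating the risks to pairwise error probabilities.
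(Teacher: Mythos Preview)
Your proof is correct and follows essentially the same route as the paper: you split $\{Z_{ij}(N)<-\rho\log R_i\}$ into the part where $\delta\ne i$ (controlled directly by the risks via $\mathbb{P}_i\{\delta=k\}\le R_k/\pi(i)$) and the part where $\delta=i$, and on the latter you apply the change of measure $d\mathbb{P}_j/d\mathbb{P}_i=e^{-Z_{ij}(N)}$ together with $\mathbb{P}_j\{\delta=i\}\le R_i/\pi(j)$ to obtain the same bound $R_i^{1-\rho}/\pi(j)$. The only cosmetic difference is that the paper writes the change of measure as an explicit sum over the sets $Q_n=\{N=n,\ \delta=i,\ Z_{ij}(n)<-\rho\log R_i\}$ rather than invoking a Wald-type identity at $N$, which is exactly what your ``summing over $\{N=n\}$'' amounts to.
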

\begin{proof}
Define the subset $Q_n$ of the sample space as
\[
Q_n = \{(y^n,u^n):Z_{ij}(n)<\rho\log R_i,~\delta = i,
N=n\}
\]
From the definition of $R_i$ in
(\ref{eqn-SS-def-risks}), for every $j\in{\cal
M}\setminus\{i\}$, we have the following set of
inequalities
\begin{align}
\frac{R_i}{\pi(j)} ~\geq~ \mathbb{P}_j\{\delta = i\}
~\geq~ \sum_{n=1}^\infty\mathbb{P}_j \left \{ Q_n
\right \} ~\geq~
R_i^\rho \sum_{n=1}^\infty \mathbb{P}_i\{Q_n\},
\nonumber
\end{align}
where the third inequality follows from the fact that
$Z_{ij}(n)< -\rho\log R_i$ on $Q_n$. Hence, for every
$i \neq j,\ i, j \in \mathcal{M},$
\begin{align}
\label{eq:prob_Q_n} \sum_{n=1}^\infty
\mathbb{P}_i\{Q_n\} \ \leq\
\frac{R_i^{1-\rho}}{\pi(j)}.
\end{align}
Thus,
\begin{align}
\label{eq:LLR_has_to_be_large}
\mathbb{P}_i\{Z_{ij}(N)<-\rho\log R_i\} 
&~\leq~\sum_{n=1}^\infty \mathbb{P}_i\{Q_n\} 
	~+~  \mathbb{P}_i\{\delta\ne i\} \nonumber
\end{align}
\begin{align}
&~\leq~
\frac{R_i^{1-\rho}}{\pi(j)} + \sum_{j\in{\cal M}\setminus\{i\}}\frac{R_j}{\pi(i)}.
\end{align}
The second inequality above follows from
(\ref{eq:prob_Q_n}) and from the fact that
$\mathbb{P}_{i}(\delta = j)\leq \frac{R_j}{\pi(i)}$.
The right-side of (\ref{eq:LLR_has_to_be_large}) goes
to $0$ since $R_i\rightarrow 0,$ for each $i \in {\cal
M}$. This proves Lemma \ref{lem:LLR_large_whp}.
\end{proof}

The following result follows from a standard
martingale convergence argument as in Lemma 5 in
\cite{cher-amstat-1959} and is omitted due to space
constraints. \newline \indent
{\em For any $0 < \rho < 1,$ it holds that}
\begin{align}
\label{eq:lemma5_chernoff} \lim_{n \rightarrow \infty}
\mathbb{P}_i\{\max_{1\leq m\leq n}\min_{j\in{\cal
M}\setminus\{i\}}Z_{ij}(m)\geq n(d_i^* +
1-\rho)\}\rightarrow ~=~0.
\end{align}

Combining the result in Lemma \ref{lem:LLR_large_whp}
and (\ref{eq:lemma5_chernoff}), we get for every 

\noindent
$0 < \rho < 1,$
\begin{align}
\mathbb{P}_i\Big\{N\leq\frac{-\rho\log
R_i}{d_i^*+1-\rho}\Big\} ~\rightarrow~ 0,
\end{align}
which is equivalent to the assertion of Lemma
\ref{lem:stop-time-is-large_whp}. 

\vspace{0.1in}
\subsubsection{The Achievability Proof of Theorem \ref{Thm6} without Condition (\ref{eqn-SS-positivity-cond})}
\label{app-B-2}

Because the instantaneous control picked in
(\ref{eqn-SS-opt-random-control}) is a function only of
the identity of the ML estimate of the hypothesis and
not of the reliability of the estimate, e.g., the value
of the posterior probability of the ML hypothesis, when
the ML estimate is incorrect, the instantaneous
control in (\ref{eqn-SS-opt-random-control}) can be
quite bad.  This can happen with large probability
especially when only a few observations are collected.
Condition (\ref{eqn-SS-positivity-cond}) 
essentially
ensures that
when the ML hypothesis is incorrect, the control value
of (\ref{eqn-SS-opt-random-control}) will not be too
bad.  Consequently, (\ref{eqn-SS-positivity-cond})
leads to a fast convergence of the ML estimate of the
hypothesis to the true one when the ML estimation is
used together with the control policy
(\ref{eqn-SS-opt-random-control}) at all times. Without
(\ref{eqn-SS-positivity-cond}), the convergence may not
happen or even if it does, it may not be fast enough.
This phenomenon is analogous to and is tightly
connected to another one, which occurs in a
somewhat more exacerbated form, in stochastic adaptive
control \cite{kuma-vara-eit-book-2006} illustrating the
failure of ML identification in closed-loop
\cite{bork-vara-ieeetac-1979}.
\newline \indent
As previously mentioned at the end of Section \ref{subsection-SS-test}, 
we slightly modify the control policy (\ref{eqn-SS-opt-random-control}) by 
occasionally sampling from the uniform control independently of the identity 
of the ML hypothesis; this sparse sampling is used to guard against the 
event of incorrect ML estimation of the hypothesis.
Precisely, for some $a > 1,$ at times $k = \lceil a^l \rceil,\ l = 0, 1,
\ldots,$ we let $U_{k+1}$ be uniformly distributed on
$\mathcal{U}$.  At all other times, we still follow the
control policy in (\ref{eqn-SS-opt-random-control}).
The stopping rule is still as in
(\ref{eqn-SS-stopping-rule}), and the final decision is
still ML.  Without loss of generality, we can assume
that for every $i \neq j,\ i, j \in \mathcal{M}$, there
exists a $u \in \mathcal{U}$ for which
\begin{align}
    D \left( p_i^u \| p_j^u \right) ~>~ 0,
    \label{eqn-AppB-2-1}
\end{align}
otherwise, the probability of error can never be driven
to zero. It now follows from (\ref{eqn-AppB-2-1}) and
the argument as in the proof of \cite[Lemma
1]{cher-amstat-1959} that for every $i \neq j$, and all
$n$ sufficiently large
\begin{align}
    \mathbb{P}_i \left \{
        \sum_{k=1}^n L_k
    \right \} ~\leq~ e^{-b \frac{\log{n}}{\log{a}}},
    \nonumber
\end{align}
where $L_k \triangleq
        \log{ \Big(
                    \frac{p_i^{U_k}\left( Y_k \right)}
                    {p_j^{U_k}\left( Y_k \right)}
              \Big )}$,
for some $b > 0,$ as we can only guarantee that
$\mathbb{E}_i \Big [ e^{- \frac{1}{2} L_k} \Big \vert
\mathcal{F}_{k-1} \Big ] < 1$ for
$\frac{\log{n}}{\log{a}}$ times in $n$ time slots
(precisely at those times when the control value is
forced to be uniformly distributed).  Let $T$ be the
earliest time such that the ML estimate of the
hypothesis equals the true hypothesis for all time $k
\geq T.$  Then, we get that for all sufficiently large
$k,$
\begin{align}
    \mathbb{P}_i \left \{ T > k \right \} ~\leq~
    M \sum_{t \geq k} e^{-b \frac{\log{t}}{\log{a}}}
    ~\leq~ O \left( k^{-\gamma} \right)
    \label{eqn-AppB-2-unconditional-consistency}
\end{align}
for an arbitrary large $\gamma$ when $a$ is chosen to
be sufficiently close to 1.  Note that it was shown in
\cite[Lemma 1]{cher-amstat-1959} that if
(\ref{eqn-SS-positivity-cond}) holds, then
$\mathbb{P}_i \left \{ T > k \right \}$ decays
exponentially.
\newline \indent
Our achievability proof of asymptotic optimality
without (\ref{eqn-SS-positivity-cond}), i.e., that the
modified test satisfies
(\ref{eqn-SS-asymp-opt-stoppingtime-CS}) without
imposing (\ref{eqn-SS-positivity-cond}), follow closely
the steps in the proof of \cite[Lemma 2]
{cher-amstat-1959} under assumption
(\ref{eqn-SS-positivity-cond}).  Due to space
limitations, we shall just emphasize key steps and
point out the difference from the proof when
(\ref{eqn-SS-positivity-cond}) is relaxed.  To this
end, we denote the maximizers in the denominator on the
right-side of (\ref{eqn-SS-asymp-opt-stoppingtime-CS})
by $q_i^*(u)$.
\newline \indent
Referring to the stopping rule in
(\ref{eqn-SS-stopping-rule}), we see that the stopping
time depends on the time needed for the Log-Likelihood
Ratio (LLR) corresponding to the closest alternative
hypothesis to cross the stopping threshold $-\log c$.
Thus, the main idea is to show that the LLR per
observation concentrates around the denominator on the
right-side of (\ref{eqn-SS-asymp-opt-stoppingtime-CS})
for the control policy described above. The key step in
the proof of (\ref{eqn-SS-asymp-opt-stoppingtime-CS})
deals with the following decomposition for an arbitrary
hypothesis $j \neq i,$ where $i$ is the true
hypothesis,
\begin{align}
\frac{1}{n} \sum_{k=1}^n L_k 
&= 
\frac{1}{n}
\sum_{k=1}^n 
\left \{ L_k - \mathbb{E}_i \left [ L_k \big \vert \mathcal{F}_{k-1} \right ] \right \} \nonumber \\
&  \ \ +
\frac{1}{n} \sum_{k=1}^n \left \{ \mathbb{E}_i \left [
L_k \big \vert \mathcal{F}_{k-1} \right ] - \sum_{u}
q_i^* \left( u \right) D \left( p_i^u \| p_j^u \right)
\right \} \nonumber \\
&  \ \ + \sum_{u} q_i^* \left( u \right) D \left( p_i^u
\| p_j^u \right). \label{eqn-Pf-Thm3-llr-decomp1}
\end{align}
The proof of the measure concentration then boils down
to proving that the two averages of the bracketed
$\left \{ \right \}$-quantities concentrate around 0
from the negative side with a sufficiently quick decay
of the probability of non-concentration. In particular,
it suffices to prove that the following two sequences
of probabilities (as a function of $n$) go to zero
sufficiently fast:
\begin{align}
\mathbb{P}_i \left \{ \frac{1}{n} \sum_{k=1}^n \left \{
L_k - \mathbb{E}_i \left [ L_k \big \vert
\mathcal{F}_{k-1} \right ] \right \} ~<~ -\epsilon
\right \}, \label{eqn-Pf-Thm3-conc-llr-1}
\end{align}
and
\begin{align}
\mathbb{P}_i \left \{ \frac{1}{n} \sum_{k=1}^n \left \{
\mathbb{E}_i \left [ L_k \big \vert \mathcal{F}_{k-1}
\right ] - \sum\limits_{u} q_i^* \left( u \right) D
\left( p_i^u \| p_j^u \right) \right \} \,<\, - \epsilon
\right \}. \label{eqn-Pf-Thm3-conc-llr-2}
\end{align}
Note that the minimum value of the third term in the
decomposition in (\ref{eqn-Pf-Thm3-llr-decomp1}) over
$j \neq i$ is specifically the denominator on the
right-side of (\ref{eqn-SS-asymp-opt-stoppingtime-CS}).
\newline \indent
The same argument leading to \cite[Equation
(5.10)]{cher-amstat-1959} gives that
(\ref{eqn-Pf-Thm3-conc-llr-1}) goes to zero
exponentially. Also,
(\ref{eqn-AppB-2-unconditional-consistency}) implies a
polynomial decay of (\ref{eqn-Pf-Thm3-conc-llr-2}), as
with probability 1,
\begin{align}
\Bigg \vert \sum_{k=1}^n \left \{
\mathbb{E}_i \left [ L_k \big \vert \mathcal{F}_{k-1}
\right ]
~-~ \sum\limits_{u} q_i^* \left( u \right) D \left(
p_i^u \| p_j^u \right)
\right \} \Bigg \vert \hspace{0.5in} \nonumber \\
\leq\   C' \min(T, n) + C''
\log{n}, \nonumber
\end{align}
for some constants $C', C''$ by virtue of fact that
$q=q_i^*$ for each $k \geq T,$ such that $k \neq \lceil
a^l \rceil, l \geq 1$ (cf. the definition of $T$ in
above).  This will lead us to \cite[Equation
(5.9)]{cher-amstat-1959} but only with a polynomial
decay (with an arbitrarily high degree $\gamma$ in
(\ref{eqn-AppB-2-unconditional-consistency})) in the
probability on the right-side of the equation.
Nevertheless, the sufficiently quick polynomial decay
in the probability therein still enables us to complete
the steps at the beginning to proof of \cite[Lemma
2]{cher-amstat-1959} to eventually upper bound the
asymptotes of the expected sample sizes to be
(\ref{eqn-SS-asymp-opt-stoppingtime-CS}).

\vspace{0.1in}
\subsubsection{Proof of Theorem \ref{Thm7}}
\label{app-B-3}

We first prove
(\ref{eqn-Thm7-hard-risk-constraints}). Let $i$ be the
true hypothesis. For any $j\in{\cal M}, j\ne i$,
consider the event
\begin{align}
A_{n,j} = \{(y^n,u^n): N_A=n,\ \delta=j\}.
\end{align}
Following the approach in \cite{atia-veer-2012}, on the
set $A_{n,j}$ we have the following set of
inequalities,
\begin{align}
\log\left(\frac{\pi(j)p_j(y^n,u^n)}{\pi(i)p_i(y^n,u^n)}\right)
&\geq\ 
\log\left(\frac{\pi(j)p_j(y^n,u^n)}
{\max\limits_{i \ne j}\pi(i)p_i(y^n,u^n)}\right) 
\nonumber \\  
&\geq\ 
\log\left(\frac{(M-1) \pi(j)}{\bar{R}_j}\right).
\end{align}
The last inequality above 
follows since the test ends at $n$ and
the stopping criteria must be met for the choice of the
thresholds in
(\ref{eqn-SS-stopping-rule-hard-risk-constraints}).
Thus,
\begin{align}
\mathbb{P}_i\{A_{n,j}\} \ &\leq\
\frac{\bar{R}_j}{(M-1)\pi(i)} \mathbb{P}_j\{A_{n,j}\}.
\nonumber
\end{align}
It now follows that
\begin{align}
\mathbb{P}_i \left \{\delta = j \right \} \,=\,
\sum_{n=1}^\infty \mathbb{P}_i\{A_{n,j}\} &\,\leq\,
\frac{\bar{R}_j}{(M-1)\pi(i)}\sum_{n=1}^\infty
\mathbb{P}_j\{A_{n,j}\} \nonumber \\
&\,\leq\,
\frac{\bar{R}_j}{(M-1)\pi(i)}.
\label{eqn-Pf-Appendix-B3-1}
\end{align}
From the definition of $R_j$ in
(\ref{eqn-SS-def-risks}), we then get that $R_j\leq
\bar{R}_j$. The result holds for each $j \in
\mathcal{M}$

The last assertion of Theorem \ref{Thm7}
pertaining to asymptotic optimality of the proposed
test follows by considering yet another test with the
stopping rule
(\ref{eqn-SS-stopping-rule-hard-risk-constraints})
being replaced by the following stopping rule with a
single threshold
\begin{equation}
    \log{ \left(
    \frac{
        p_{\hat{i}_n} \left( y^n, u^n \right)
    }
    {
        \max\limits_{j \,\neq\, \hat{i}_n} ~p_{j} \left( y^n, u^n \right)
    } \right)
    }
    ~\geq~
    \log{\left( \left( M-1 \right)
              \left( \max\limits_{i \neq j}
                \frac{\pi (j) }{\bar{R}_{i}}
          \right) \right)},
\label{eqn-Pf-Appendix-B3-2}
\end{equation}
and with the same control and decision rule as those of
the proposed test. It follows from
(\ref{eqn-SS-stopping-rule-hard-risk-constraints}) and
(\ref{eqn-Pf-Appendix-B3-2}) that the stopping time of
this new test will always dominate (larger than) that
of the proposed test a.s.  
Let us denote the two respective stopping times by $N$ and $N'$.
Since $\pi$ has a full
support, as $\max\limits_{i \in \mathcal{M}} \bar{R}_i
\rightarrow 0,$ the single threshold on the right-side
of (\ref{eqn-Pf-Appendix-B3-2}) will go to infinity. By
Theorem \ref{Thm6}, this new test with the single
threshold is asymptotically optimal, i.e., it satisfies,
for every $i \in \mathcal{M},$
\begin{align}
\lim\limits_{\max\limits_i \bar{R}_i \rightarrow 0}
\ -\frac{\mathbb{E}_i \left [ N' \right ]}{\log{c}}
\ \leq\ 
\frac{1}{\max\limits_{q(u)} \min\limits_{j \neq i} \sum\limits_u q(u) D \left( p_i^u \| p_j^u \right)},
\label{eqn-Pf-Appendix-B3-3}
\end{align}
where $c = \frac{1}{M-1} \left( \min\limits_{i \neq j}  \frac{\bar{R}_i}{\pi(j)} \right).$
On the other hand, it follows from (\ref{eqn-Pf-Appendix-B3-1}) and 
the assumption in the statement of Theorem \ref{Thm7} that 
\begin{align}
	\max\limits_i \mathbb{P}_i \left \{ \delta \neq i \right \}
	~\leq~ \max\limits_{i \neq j} \frac{\bar{R}_j}{\pi(i)}
	~\leq~ K'c,
\label{eqn-Pf-Appendix-B3-4}
\end{align}
for a suitable constant $K'$.  The aforementioned dominance, i.e., $N \leq N'$ a.s., and 
(\ref{eqn-Pf-Appendix-B3-4}) along with 
(\ref{eqn-Pf-Appendix-B3-3}) give that for every $i \in \mathcal{M},$
\begin{align}
\lim_{\max\limits_{i} \bar{R}_i \rightarrow 0}\,
-\frac{\mathbb{E}_i \left [ N \right ]}
{\log{\left(\max\limits_k \mathbb{P}_k \left \{ \delta \neq k \right \}\right)}}
&\leq\  
\lim_{\max\limits_{i} \bar{R}_i \rightarrow 0}\,
-\frac{\mathbb{E}_i \left [ N' \right ]}{\log{c}}	\hspace{0.6in}  \nonumber
\end{align}
\begin{align}
\hspace{1.6in} \leq\   
\frac{1}{\max\limits_{q(u)} \min\limits_{j \neq i} ~\sum\limits_{u} q(u) D\left( p_i^u \| p_j^u \right)}.
\nonumber
\end{align}

\bibliographystyle{IEEEtran}
\bibliography{niti-atia-veer-2012}

\vspace{0.1in}
Sirin Nitinawarat (SM'09--M'11) obtained the B.S.E.E. degree from Chulalongkorn University, Bangkok, Thailand, with first class honors, and the M.S.E.E. degree from the University of Wisconsin, Madison.  He received his Ph.D. degree from the Department of Electrical and Computer Engineering and the Institute for Systems Research at the University of Maryland, College Park, in December 2010.  He is now a postdoctoral research associate at the Coordinated Science Laboratory at the University of Illinois at Urbana-Champaign.  His research interests are in stochastic control, statistical signal processing, information and coding theory, communications, estimation and detection, and machine learning.  

Dr. Nitinawarat was a co-organizer for the special session on {\em Controlled Sensing for Inference} at the 2012 IEEE International Conference on Acoustics, Speech and Signal Processing (ICASSP); and chair for the session on {\em Distributed Inference in Sensor Networks} at the $49^th$ Annual Allerton Conference on Communication, Control, and Computing (2011).  He was a finalist for the best student-paper award at the IEEE International Symposium on Information Theory, which was held at Austin, Texas, in 2010.  During his Ph.D. study at the University of Maryland, he received graduate teaching fellowships in Fall 2007, Spring 2008, and Spring 2009.

\vspace{0.1in}
George K. Atia (S'01--M'09) received the B.Sc. and M.Sc. degrees from Alexandria University, Egypt, in 2000 and 2003, respectively, and the Ph.D. degree from Boston University, MA, in 2009, all in electrical and computer engineering.

He joined the University of Central Florida in Fall 2012 where he is currently an assistant professor in the department of Electrical Engineering and Computer Science. From Fall 2009 to 2012, he was a postdoctoral research associate at the Coordinated Science Laboratory (CSL) at the University of Illinois at Urbana-Champaign (UIUC). His research interests include statistical signal processing, stochastic control, wireless communications, detection and estimation theory, and information theory.

Dr. Atia is the recipient of many awards, including the Outstanding Graduate Teaching Fellow of the Year Award in 2003--2004 from the Electrical and Computer Engineering Department at Boston University, the 2006 College of Engineering DeanÕs Award at the BU Science and Engineering Research Symposium, and the best paper award at the International Conference on Distributed Computing in Sensor Systems (DCOSS) in 2008.

\vspace{0.1in}
Venugopal V. Veeravalli (M'92--SM'98--F'06) received the B.Tech. degree (Silver Medal Honors) from the Indian Institute of Technology, Bombay, in 1985, the M.S. degree from Carnegie Mellon University, Pittsburgh, PA, in 1987, and the Ph.D. degree from the University of Illinois at Urbana-Champaign, in 1992, all in electrical engineering.

He joined the University of Illinois at Urbana-Champaign in 2000, where he is currently a Professor in the department of Electrical and Computer Engineering and the Coordinated Science Laboratory. He served as a Program Director for communications research at the U.S. National Science Foundation in Arlington, VA from 2003 to 2005.  He has previously held academic positions at Harvard University, Rice University, and Cornell University, and has been on sabbatical at MIT, IISc Bangalore, and Qualcomm, Inc. His research interests include distributed sensor systems and networks, wireless communications, detection and estimation theory, including quickest change detection, and information theory.

Prof. Veeravalli was a Distinguished Lecturer for the IEEE Signal Processing Society during 2010--2011. He has been an Associate Editor for Detection and Estimation for the IEEE Transactions on Information Theory and for the IEEE Transactions on Wireless Communications. Among the awards he has received for research and teaching are the IEEE Browder J. Thompson Best Paper Award, the National Science Foundation CAREER Award, and the Presidential Early Career Award for Scientists and Engineers (PECASE).

\end{document}